\def\LSB{\left[}        
\def\RSB{\right]}       
\def\LB{\left(}         
\def\RB{\right)}        
\newcommand{\asto}{\overset{\rm a.s.}{\longrightarrow}}
\newfont{\bbb}{msbm10 scaled 500}
\newfont{\bb}{msbm10 scaled 1100}
\newcommand{\RR}{\mbox{\bb R}}
\newcommand{\av}{{\bf a}}
\newcommand{\cv}{{\bf c}}
\newcommand{\nv}{{\bf n}}
\newcommand{\rv}{{\bf r}}
\newcommand{\uv}{{\bf u}}
\newcommand{\xv}{{\bf x}}
\newcommand{\yv}{{\bf y}}
\newcommand{\zv}{{\bf z}}
\newcommand{\Am}{{\bf A}}
\newcommand{\Dm}{{\bf D}}
\newcommand{\Hm}{{\bf H}}
\newcommand{\Id}{{\bf I}}
\newcommand{\Km}{{\bf K}}
\newcommand{\Mm}{{\bf M}}
\newcommand{\Um}{{\bf U}}
\newcommand{\Wm}{{\bf W}}
\newcommand{\Xm}{{\bf X}}
\newcommand{\Ym}{{\bf Y}}
\newcommand{\epsilonv}{\hbox{\boldmath$\epsilon$}}
\newcommand{\thetav}{\hbox{\boldmath$\theta$}}
\newcommand{\Lambdam}{\hbox{\boldmath$\Lambda$}}
\newcommand{\Sigmam}{\hbox{\boldmath$\Sigma$}}
\newcommand{\Omegam}{\hbox{\boldmath$\Omega$}}
\newcommand{\diag}{{\hbox{diag}}}
\newcommand{\defines}{{\,\,\stackrel{\scriptscriptstyle \bigtriangleup}{=}\,\,}}
\newtheorem{lemma}{Lemma}
\newtheorem{theorem}{Theorem}
\newcommand{\beqa}{\begin{eqnarray}}
\newcommand{\eeqa}{\end{eqnarray}}
\newcommand{\dsp}{\displaystyle}
\begin{document}

\title{Data-Driven False Data Injection Attacks Against Power Grids: A Random Matrix Approach }
\author{Subhash~Lakshminarayana~\IEEEmembership{Member, IEEE}, Abla Kammoun~\IEEEmembership{Member, IEEE}, M\'erouane Debbah~\IEEEmembership{Fellow, IEEE} and H. Vincent Poor~\IEEEmembership{Fellow, IEEE} 
\thanks{S. Lakshminarayana is with the University of Warwick, Coventry, UK (email: subhash.lakshminarayana@warwick.ac.uk). A. Kammoun is
with the Electrical Engineering Department, King Abdullah University of
Science and Technology, Thuwal, Saudi Arabia (abla.kammoun@gmail.com).
M. Debbah is with the Mathematical and Algorithmic Sciences Lab, Huawei
Technologies Co. Ltd., France (merouane.debbah@huawei.com). H. Vincent Poor is with the Department of Electrical Engineering, Princeton University, USA (email: poor@princeton.edu). The work was partially presented at ICASSP-2018 \cite{LakshICASSP2018}. This research was supported in part by a Startup grant at the University of Warwick and in part by the U.S. National Science Foundation under Grants  DMS-1736417 and ECCS-1824710.
}}

\maketitle

\begin{abstract}
We address the problem of constructing false data injection (FDI) attacks
that can bypass the bad data detector (BDD) of a power grid. The attacker is assumed to have access to only power grid measurement data traces collected over a limited period of time and no other prior knowledge about the grid. Existing related algorithms are formulated under the assumption that the attacker has access to measurements collected over a long (asymptotically infinite) time period, which may not be realistic. We show that these approaches do not perform well when the attacker has access to measurements from a limited time window only. We design an enhanced algorithm to construct FDI attack vectors in the face of limited measurements that can nevertheless bypass the BDD with high probability. The algorithm design is guided by results from random matrix theory. Furthermore, we characterize an important trade-off between the attack's BDD-bypass probability and its sparsity, which affects the spatial extent of the attack that must be achieved. Extensive simulations using data traces collected from the MATPOWER simulator and benchmark IEEE bus systems validate our findings.
\end{abstract}

\begin{IEEEkeywords}
Data-driven FDI attack, random matrix theory, spiked model, sparse false data injection attack.
\end{IEEEkeywords}

\IEEEpeerreviewmaketitle

\section{Introduction}

The growing integration of information and communication technologies (ICTs) in power grids has made them vulnerable to cyber attacks \cite{Ukraine2016}. In this work, we study the problem of constructing false data injection (FDI) attacks against state estimation in a power grid from an attacker's perspective. It has been shown \cite{Liu2011} that if the attacker obtains detailed knowledge of the power grid topology and transmission line reactance values -- i.e., the system's {\em measurement matrix} -- then they can construct FDI attacks that bypass the grid's bad data detector (BDD). Subsequent research \cite{LiPoorScaglione2013,Kim2015,YuBlind2015,ChinBlindAC2018} has shown that an attacker can learn the power grid's measurement matrix \cite{LiPoorScaglione2013}, or learn the structure of its column space by estimating the basis vectors \cite{Kim2015,YuBlind2015,ChinBlindAC2018} from accessed measurement data (i.e., nodal power injections and line power flows) only. The focus of our work is on constructing these \emph{data-driven} FDI attacks. Undetected FDI attacks can severely affect the power grid operation, such as unsafe voltage/frequency excursions \cite{LaksheEnergy2017}.

Prior work on designing data-driven BDD-bypass attacks \cite{Kim2015,YuBlind2015,ChinBlindAC2018} are based on the technique of \emph{principal component analysis} (PCA), whose basic idea is to use the \emph{sample covariance matrix} to identify the eigenmodes along which the data exhibits the greatest variance. It performs efficiently when the measurement period is significantly large compared to the signal dimension \cite{anderson1963}.  However, data-driven attacks based on the PCA-based method fail to remain stealthy if the measurements accessed by the attacker have missing values (due to communication loss or device malfunctions). Subsequent work \cite{Anwar2016} applied a robust-PCA based approach to deal with this issue. Similar robust learning techniques were also applied in \cite{TianDataDriven2018, Tian2019} to deal with the joint problem of missing measurements and the construction of sparse FDI attacks. A different stream of work \cite{RahmanIncomplete2012, LiuIncomplete2015} has considered the problem of crafting FDI attacks when the attacker has incomplete/partial knowledge of the power grid topology and transmission line reactances. However, these works do not consider the attacker's learning of the grid parameters from the measurement data. Hence, they do not present a complete picture of the attacker's capabilities in this context. 


A major drawback of existing work on data-driven FDI attacks \cite{Kim2015,YuBlind2015,ChinBlindAC2018, Anwar2016, TianDataDriven2018, Tian2019} is that they perform well only when the attacker has access to measurements from a large time window (asymptotically infinite). For practical purposes, it is important to understand these attacks under a limited measurement time window. The reasons include (i) active topology control \cite{DFACTS2007} or renewable energy integration that leads to an inherently dynamic operating environment, thereby rendering measurements outdated and irrelevant after some time; and (ii) an attacker's desire or need (e.g., due to limited resources or limited exploitation time windows) to launch the attack quickly. Thus, in a practical scenario, the measurement time period may not be asymptotically large compared to the signal dimension, especially for large power grids (refer to the example presented in Section \ref{subsec:drawback}). It has been demonstrated that under such a regime, the principal component estimated by PCA is inconsistent \cite{Johnstone2009}. Indeed, our experiments show that FDI attacks constructed by the existing PCA-based algorithms \cite{Kim2015,YuBlind2015,ChinBlindAC2018} do not perform well (in terms of the BDD-bypass probability) when applied in a limited measurement period setting.

To address these shortcomings, in this paper, we analyze the problem of finding a BDD-bypassing attack using measurement data collected from a limited time window (comparable to the measurement signal dimension) and identify guiding principles for the solution in this context. The analysis provides an important understanding of the attacker's capabilities in designing FDI attacks by accessing the system measurements. The understanding has practical relevance in the design of defense strategies, such as determining the frequency of reactance perturbations in the context of moving target defense \cite{LakshDSN2018}, which in turn depends on the attacker's capability of learning the system parameters.

Under the limited measurement period setting, a key issue is that only a few eigenmodes can be reliably estimated from the sample covariance matrix. This number, in turn, depends on the length of the measurement period relative to the signal dimension. To bypass the BDD with a high probability, it is important for the attacker to identify these critical eigenmodes. Direct application of the PCA method as in \cite{Kim2015,YuBlind2015,ChinBlindAC2018} does not use this knowledge, and hence, performs poorly. In this work, we propose an enhanced algorithm to construct FDI attacks in the face of limited measurement period that can nevertheless bypass the BDD with high probability.
Our algorithm design is based on results from random matrix theory (RMT). The application is based on an important observation that the power grid's state estimation utilizes several redundant sensor measurements to filter the effect of measurement noise and obtain an optimal estimate on the system state \cite{AburExposito2004, wood1996power}. In other words, the dimension of the measurement vector is much greater than the size of the system state. 

Under this setting, the covariance matrix of the sensor measurements has a structure similar to the so-called ``spiked models" in RMT \cite{BaikSilver06, Paul07}, which comprises of a low-rank perturbation of a scaled identity matrix. Here, the leading few eigenmodes correspond to the subspace spanned by the signal (i.e., system state), whereas the bulk of the eigenmodes (corresponding to the redundant measurements) correspond to the noise subspace. For data obtained from the spiked model, RMT results can be used to characterize the number of eigenmodes that can be estimated accurately as a function of the measurement time window, as well as the corresponding estimation accuracy \cite{BaikSilver06, Paul07}. Using these results, the attacker can carefully design the attack vector by restricting it to a lower-dimensional subspace that is spanned by the accurately estimated eigenmodes only and bypass the BDD with a high probability. Otherwise, the inaccurately estimated basis vectors may mislead the attack vector to a subspace that is different from the intended one, thereby risking detection by the BDD.

However, restricting the attack vector to a lower-dimensional subspace of the estimated column space limits the attacker's freedom in crafting the FDI attack. In particular, a resource-constrained attacker may wish to minimize the number of the meters that must be compromised, or equivalently find the sparsest attack vector in the execution \cite{OzaySparse2013,KimPoor2011}. Clearly, maximizing the sparsity of the attack vector is best achieved if we have an unconstrained choice of this vector over the full estimated column space of the measurement matrix. Hence, the attacker faces a fundamental tradeoff. On the one hand, as we observed, restricting the attack vector to a lower-dimensional subspace (spanned by the accurately estimated basis vectors) will enhance the BDD-bypass probability under the limited measurement period setting; i.e., the restriction makes the attack efficient temporally. On the other hand, this restriction may reduce the sparsity of the optimized attack vector, thus making it less efficient spatially. 
To understand the tradeoffs between the conflicting objectives, we compute the sparsest attack vector while constraining it to subspaces of varying lower dimensions of the full estimated column space.

To summarize, the contributions of this work are as follows.

\begin{itemize}

	\item We propose an enhanced algorithm to construct data-driven FDI attacks in the limited measurement period setting that can bypass the BDD with high probability using results from RMT. 

	\item We characterize an important trade-off between the FDI attack's BDD-bypass probability and the number of power meters in the grid that the attacker has to compromise in achieving the attack. 

	\item We illustrate the fundamental trade-off by performing extensive simulations using benchmark IEEE bus systems. The results show that the attacker can significantly enhance the BDD-bypass probability using our proposed approach.

\end{itemize}

To the best of our knowledge, this work is the first to apply RMT results in the context of smart grid security. While RMT results have found wide application in other domains such as wireless communications, finance, physics etc. (we refer the reader to reference \cite{Couillet2011}, Chapter 1 for a comprehensive review of RMT applications), its application to smart grids has been scarce. In particular, the application of the RMT spiked model results to FDI attack construction is novel and has not been considered previously, and this is one of the important contributions of our work.

The rest of the paper is organized as follows. We describe the system model in Section~\ref{sec:II}. We review existing subspace method based algorithms to construct data-driven FDI attacks and point our their drawbacks in Section~\ref{III}. In Section~\ref{sec:Limited_Mes}, we present data-driven FDI attacks under the limited measurements period setting using RMT results and analyze its performance. The trade-offs in data-driven FDI attacks are discussed in Section~\ref{sec:Trade_off}. The simulation results are presented in Section~\ref{sec:Sim_Res}. Finally, conclusions are drawn in Section \ref{V}. The technical proofs are presented in Appendices A, B and C.

\emph{Notations}: Throughout this work, we use boldface lowercase
and uppercase letters to designate column vectors and
matrices, respectively. For a matrix $\Am,$ we let $\av_i$ denote its $i^{\text{th}}$ column. The notation $\Am_s$ denotes a matrix consisting of the first $s$ columns of the matrix $\Am,$ i.e., $\Am_s =  [\av_1,\dots,\av_s],$ for any integer value $s.$

\section{System Model}
\label{sec:II}
We consider a power grid that is characterized by a set of buses
$\mathcal{N} = \{0,1,2,\dots,N\}$ and transmission lines 
$\mathcal{L} = \{1,2,\dots,L\}$. The node with index $0$ is used to represent the reference node.
The grid is assumed to operate in a time slotted manner indexed by $t = 1,2,\dots,T.$ 
To model power flows within the grid, we adopt the direct current (DC) power flow model \cite{wood1996power}. Under this model, the system state corresponds to the nodal voltage phase angles, 
which we denote by $\thetav[t] = [\theta_1[t],\dots,\theta_N[t]]^T;$ i.e., $\theta_i[t], \ i \in \mathcal{N}$ is the voltage phase angle at bus~$i$ during the time slot $t.$  For the reference bus, $\theta_0[t] = 0, \forall t.$
We assume that the system state fluctuates around a mean value, e.g., due to the temporal variations of the load. Thus, $\thetav[t]  = \bar{\thetav} + \epsilonv[t],$ where $\epsilonv[t]$ is assumed to be an independent and identically distributed (i.i.d.) random vector (across time) whose covariance is given by $\sigma^2_{\theta} \Id,$ (where $\Id$ denotes an identity matrix). Here in, $\bar{\thetav}$ represents the bus voltage phase angles due at a base load (e.g., obtained by solving the optimal power flow considering a base load). The temporal independence assumption of the system state fluctuations can be met by taking measurements with sufficient load/angle variations over time.

\subsubsection*{State Estimation and Bad Data Detection}
The system state $\thetav[t]$ is monitored using sensors deployed at the buses and transmission lines. These sensors measure respectively the nodal power injections and the forward/reverse line power flows. Under the linear power flow model, these measurements, which we denote by $\zv[t] \in \mathbb{R}^{M}$ (where $M$ denotes the number of measurements), are related to the system state $\thetav[t] \in \mathbb{R}^{N}$ as
\begin{equation}
\zv[t] = \mathbf{H} \thetav[t] + \nv[t], \quad t = 1, 2, \cdots, T,  \label{eqn:eqzt}
\end{equation}
where $\mathbf{H} \in \mathbb{R}^{M \times N}$ is the measurement matrix and $\nv[t]$ is the sensor measurement noise. The noise is assumed to be zero-mean Gaussian\footnote{We note that the analysis in this paper is more generally applicable to any distribution of the noise as long as the distribution has a bounded fourth moment.} with covariance matrix $\sigma^2_n \Id$, and independent of the system state $\thetav[t]$. It is also assumed to be i.i.d. across the time slots. Without the loss of generality we set $\sigma^2_n = 1$ throughout the paper, and scale the covariance of the $\thetav[t]$ appropriately (i.e., we set $(\sigma^{\prime}_{\theta})^2 = (\sigma^2_{\theta}/\sigma^2_n)$ in our analysis).  
The measurement matrix $\Hm$ depends on the system topology (i.e., the bus connectivity) and the branch reactances \cite{wood1996power}. We assume that within the considered time interval $T$, $\Hm$
does not change.   
The estimate of the system state, denoted by $\widehat{\thetav}[t],$ is recovered from the measurement vector $\zv[t]$ using a maximum-likelihood (ML) technique \cite{AburExposito2004}: $\widehat{\thetav}[t] = \left( \mathbf{H}^{T}  \mathbf{H}\right)^{-1} \mathbf{H}^{T} \zv[t].$

After state estimation, the residual vector is given by $\rv[t] =  \zv [t] - \mathbf{H} \widehat{\thetav}[t].$
The BDD checks for possible measurement inconsistencies in $\zv[t]$ works by comparing the norm of the residual vector $r = ||\rv[t]||^2$ against a pre-defined threshold $\zeta.$ It raises an alarm if $r \geq \zeta.$ The threshold $\zeta$ is selected to ensure a certain false-positive (FP) rate.

\subsubsection*{Attacker Model} 
We consider an attacker who can eavesdrop on the measurement data communicated between the field devices and the control center by exploiting vulnerabilities in the communication system. We consider different \emph{read} and \emph{write} capabilities for the attacker, since read access only requires passive sniffing of the network data whereas write access requires modifying the network packets (which is typically harder to perform in practice). Accordingly, we assume that the attacker can \emph{read} all the measurements in the system (i.e., no missing measurements). However, the attacker may have \emph{write} access to only a partial subset of measurements (see Section V). Furthermore, the attacker has no other information about the grid (e.g., its topology or transmission line reactances).

The attacker's objective is to craft FDI attacks against the state estimation. Denote the attack vector by $\av[t] \in \RR^M,$ the sensor measurements under attack by $\zv_a[t],$ where $\zv_a[t] = \zv[t]+\av[t]$, and the BDD residual under attack 
by $r_a[t] = ||\zv_a[t] - \mathbf{H} \widehat{\thetav}_a[t] ||^2.$ It has been shown \cite{Liu2011} that for an attack of the form $\av[t] = \Hm \cv[t],$ the residual value remains unchanged under the attack, i.e., $r_a[t] = r[t].$ Hence, the BDD's detection probability for such attacks is no greater than the FP rate. We will henceforth refer to these attacks as \emph{undetectable} attacks. Note that constructing an undetectable FDI attack requires the knowledge of $\Hm.$ In data-driven FDI attack, the attacker strives to construct an undetectable FDI attack by learning the system parameters using the accessed measurement data.

\section{Subspace Method Based Algorithm and the Drawbacks}
\label{III}
In this section, we review existing subspace-based approaches for constructing undetectable data-driven FDI attacks~\cite{Kim2015,YuBlind2015,ChinBlindAC2018}, and point out its drawbacks under a practical regime of a limited observation time window.

\subsection{Algorithm Description} 

Note that designing an undetectable attack is equivalent to finding a non-zero vector in $Col(\mathbf{H})$, or equivalently, a linear combination of the basis vectors that span $Col(\mathbf{H}).$ The attacker must estimate the basis vectors using the noisy measurement data $\zv[t], \ t = 1,\dots,T.$ This problem is well studied in the signal processing literature \cite{Krim1996}, and has been used to guide the construction of data-driven FDI attacks. 

The key idea is to use the covariance matrix of the measurements $\Sigmam_{\zv} = \mathbb{E} [(\zv[t]-\mathbb{E} [\zv[t]]) (\zv[t] - \mathbb{E} [\zv[t]])^T].$ From \eqref{eqn:eqzt}, it follows that
\begin{align}
\bm{\Sigma}_{\zv} = \sigma^2_{\theta} \mathbf{H}  \mathbf{H}^T+  \mathbf{I}.
\end{align}
Let $\Um \Lambdam \Um^T $ be the SVD of $\bm{\Sigma}_{\zv},$ 
where $\Um = [\uv_1,\dots,\uv_M],$
is a matrix consisting of  the eigenvectors of $\Sigmam_{\zv},$ and 
$\Lambdam = \text{diag} (\lambda_1,\dots,\lambda_M)$ is a matrix consisting of the eigenvalues of $\Sigmam_{\zv}.$
Note that the rank of the matrix $\sigma^2_\theta {\Hm} {\Hm}^T$ is $N.$ Thus, the first $N$ columns of $\Um$ corresponding to the $N$ largest singular values must form the basis
vectors of $Col(\sigma^2_\theta {\Hm} {\Hm}^T).$ Since, 
$Col(\sigma^2_\theta {\Hm} {\Hm}^T)$ is equivalent to $Col(\Hm)$, they also form the basis vectors of $Col(\Hm)$ \cite{Krim1996}. Thus, the attacker must estimate the eigenvectors of $\Sigmam_{\zv}$ in order to construct an undetectable FDI attack vector.

We note that the attacker cannot directly execute the procedure stated above since the $\bm{\Sigmam}_{\zv}$ is unknown.
However, it can be estimated using the measurement data $\{ \zv [t] \}^T_{t = 1}$. Based on this observation, the procedure to construct data-driven FDI attacks is summarized in Alg.\ref{alg1}. (We use the superscript $\widehat{}$ to denote estimates of the corresponding quantities. The notation $\Am_s$ denotes a matrix consisting of the first $s$ columns of the matrix $\Am,$ i.e., $\Am_s =  [\av_1,\dots,\av_s],$ for any integer value $s.$ ).
\begin{algorithm}
\caption{Data-driven FDI attack}
\label{alg1}
 \begin{itemize}
\item[{1.}] Using measurements $\{ \zv[1],\dots,\zv[T] \},$ compute the sample
covariance matrix $\widehat{\bm{\Sigma}}_{\zv}$ as
\begin{align*} 
\widehat{\bm{\Sigma}}_{\zv} = \frac{1}{T-1} \sum_{t = 1}^{T} \left( \zv[t] - \bar{\zv} \right)\left( \zv[t] - \bar{\zv} \right)^T, 
 \end{align*}
where $\bar{\zv}$ denotes the sample mean given by $\bar{\zv} = \frac{1}{T-1} \sum_{t = 1}^{T} \zv[t] .$
%
\item[{2.}] Perform singular value decomposition (SVD) of $\widehat{{\Sigmam}}_{\zv}$ as
$\widehat{\bm{\Sigma}}_{\zv} = \widehat{\mathbf{U}} \widehat{\bm{\Lambda}} \widehat{\mathbf{U}}^T,$ where $\widehat{\Um} = [\widehat{\uv}_1,\dots,\widehat{\uv}_M]$ and $\widehat{\Lambdam} = \text{diag} (\widehat{\lambda}_1,\dots,\widehat{\lambda}_M).$

\item[{3.}] Construct an undetectable FDI attack vector as
$\mathbf{a}[t] = \widehat{\mathbf{U}}_N \mathbf{c}[t],$
where $\cv[t] \in \RR^N.$
\end{itemize}
\end{algorithm}

\subsection{Drawbacks of Existing Techniques}
\label{subsec:drawback}
The aforementioned subspace estimation algorithm performs well in a classical setting when the number of temporal measurements are far greater than the system dimension (i.e., $T \gg M, M/T \to 0$).
However, under a practical setting, it is unreasonable to expect the availability of an ``infinite time window", especially for large bus systems. For instance, consider the IEEE-118 bus system which has $M = 490$ measurements per time slot (assuming a fully measured system). Under an optimistic assumption of obtaining a temporally decorrelated measurement every minute, the attacker would require a measurement time window of $4900$ minutes, or approximately, $80$ hours, for the ratio of $M/T = 0.1.$ However, the system topology may have changed well before this duration. 

Thus, we focus on a practically relevant scenario where for a given bus system, the size of the measurement vector $M$ and the measurement time window $T$ are of the same orders of magnitude, i.e., $M/T = p$. This scenario is especially relevant for large power grids. Under this regime, the principal components estimated by Algorithm~1 are known to be inconsistent \cite{Johnstone2009}. Thus, in the rest of the paper, we present an enhanced algorithm for strengthening the attack's BDD-bypass probability when the attacker has access to measurements from a limited time window. Furthermore, we
characterize an important trade-off between the attack's BDD-bypass probability and the number of compromised measurements in executing the attack.

\section{Data-driven FDI Attacks with Limited Number of Measurements}
\label{sec:Limited_Mes}
In this section, we present an enhanced algorithm for designing data-driven FDI attacks. From the discussion in Section~III, note that the problem at hand is equivalent to estimating the principal eigenvalues/vectors of $\Sigmam_{\zv}$ from the corresponding sample covariance matrix $\widehat{\Sigmam}_{\zv}.$ Under a limited measurement period setting, a key issue is that only a few eigenmodes can be reliably estimated from the sample covariance matrix. RMT results can help us identify those key eigenmodes as well as characterize their estimation accuracy. We first present a brief overview of RMT and then show its application to data-driven FDI attacks.

\subsection{Brief Introduction to Random Matrix Theory and the Spiked Model}
RMT studies the properties of matrices whose entries are random. Of particular interest are the matrix's spectral properties when its dimensions grow large.

\subsubsection*{Marcenko-Pastur Law}
Consider a matrix $\Xm = [\xv[1],\xv[2],\dots,\xv[T]]\in \RR^{M \times T}$ whose columns $\xv[1],\xv[2],\dots$ are drawn from a multivariate Gaussian distribution with zero mean and identity covriance matrix, i.e., $\xv[1] \sim \mathcal{N} ({\bf 0}, \Id), i = 1,\dots,T.$ When the number of snapshots $T$ is very large and the vector size $M$ is fixed, i.e., $T \to \infty$ and $M/T \to 0,$ then the sample covariance matrix converges to the true covariance matrix ($\Id$ in this case) asymptotically \cite{Bill86}:
\begin{align}
\widehat{\Sigmam}_{\xv} = \frac{1}{T} \sum^T_{t = 1} \xv[t] \xv[t]^H \asto \Id \defines \mathbb{E} [\xv[t] \xv[t]^H].
\end{align}
The convergence result above holds for any matrix norm, i.e., $||\Id - \widehat{\Sigmam}_{\xv}|| \to 0$ on a set of probability one. Further, the eigenvalues of $\widehat{\Sigmam}_{\xv}$ will converge to a single mass at $1.$

However, when the number of snapshots $T$ is large, but not extremely large compared to the vector size $M,$  i.e., $T \to \infty$ and $M/T = p > 0$, the above result no longer holds. Specifically, $||\Id - \widehat{\Sigmam}_{\xv}||$ does not go to zero despite element-wise convergence of $\widehat{\Sigmam}_{\xv}$ to the identity matrix. 
This can be observed in Fig.~\ref{fig:Spiked_model} (top figure), where the histogram of eigenvalues of $\widehat{\Sigmam}_{\xv}$ is plotted for $M = 500, T = 2000.$ Note that the eigenvalues of $\widehat{\Sigmam}_{\xv}$ do not converge to a single mass at $1,$ but are spread around $1.$ This paradoxical behavior occurs since despite being large, $T$ is never very large compared to $M.$ The distribution of the eigenvalues of $\widehat{\Sigmam}_{\xv}$ in this case converges to a non-random distribution known as the Marcenko-Pastur (MP) law \cite{Marcenko_1967}, which has a probability density function given by
\begin{align}
f(x) = (1-p^{-1}) \delta (x) + \frac{1}{2 \pi c x} \sqrt{(x-a)^+ (b-x)^+},
\end{align}
where $a = (1 - \sqrt{p})^2,$ $b = (1 + \sqrt{p})^2$ and $\delta(x)$ is the Dirac-delta function. We note that $a$ and $b$ mark the extremities of the spread of the eigenvalues around $1.$ Moreover, for $T \to \infty, M/T = p >0,$ it is guaranteed that no eigenvalue of $\widehat{\Sigmam}_{\xv}$ is found outside the set $[a,b],$ almost surely.

\subsubsection*{Spiked Model}

\begin{figure}[!t]
	\centering
	\begin{subfigure}{0.33\textwidth}
		\includegraphics[width=1\textwidth]{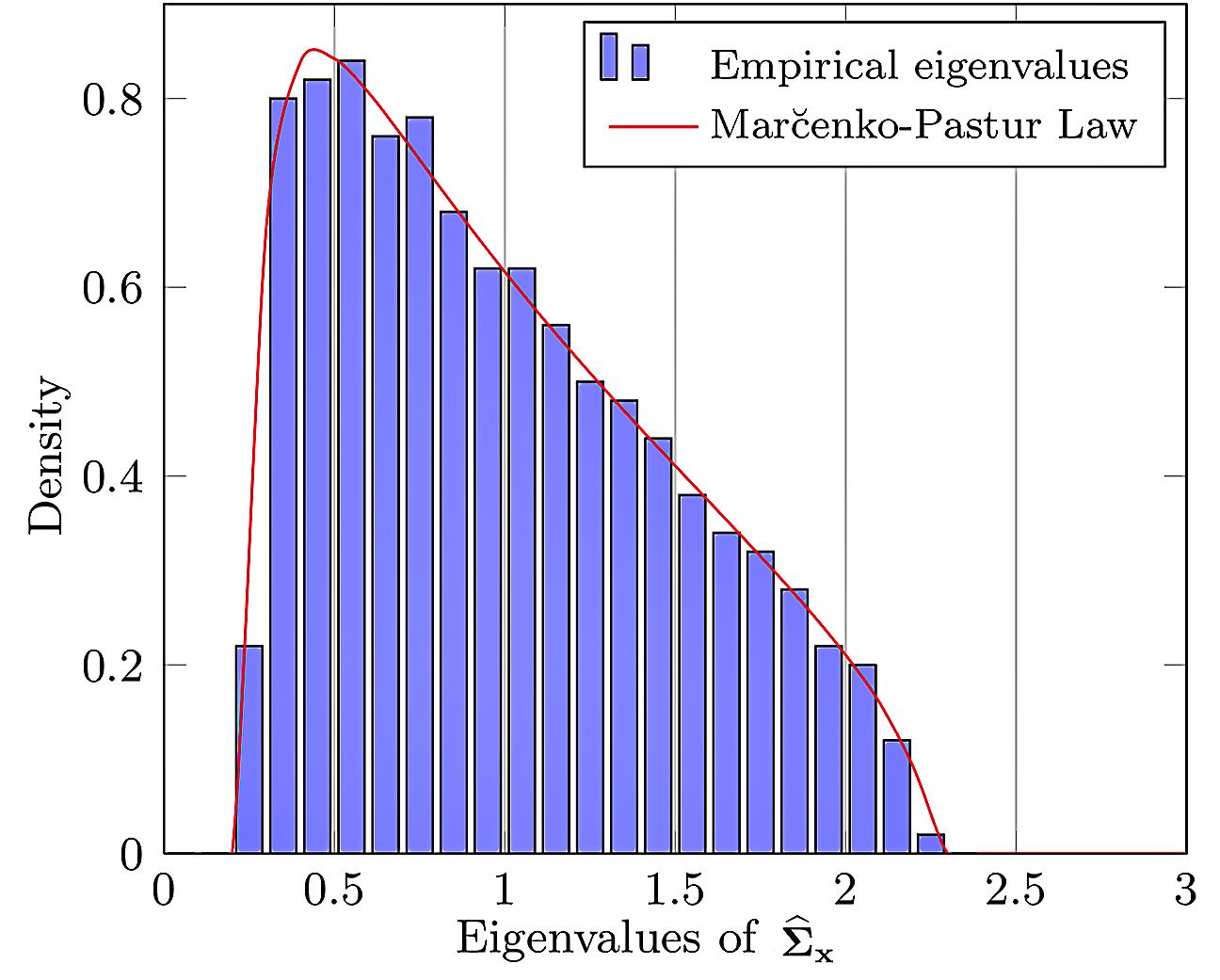}
	\end{subfigure}
	~
	\begin{subfigure}{0.33\textwidth}
		\includegraphics[width=1\textwidth]{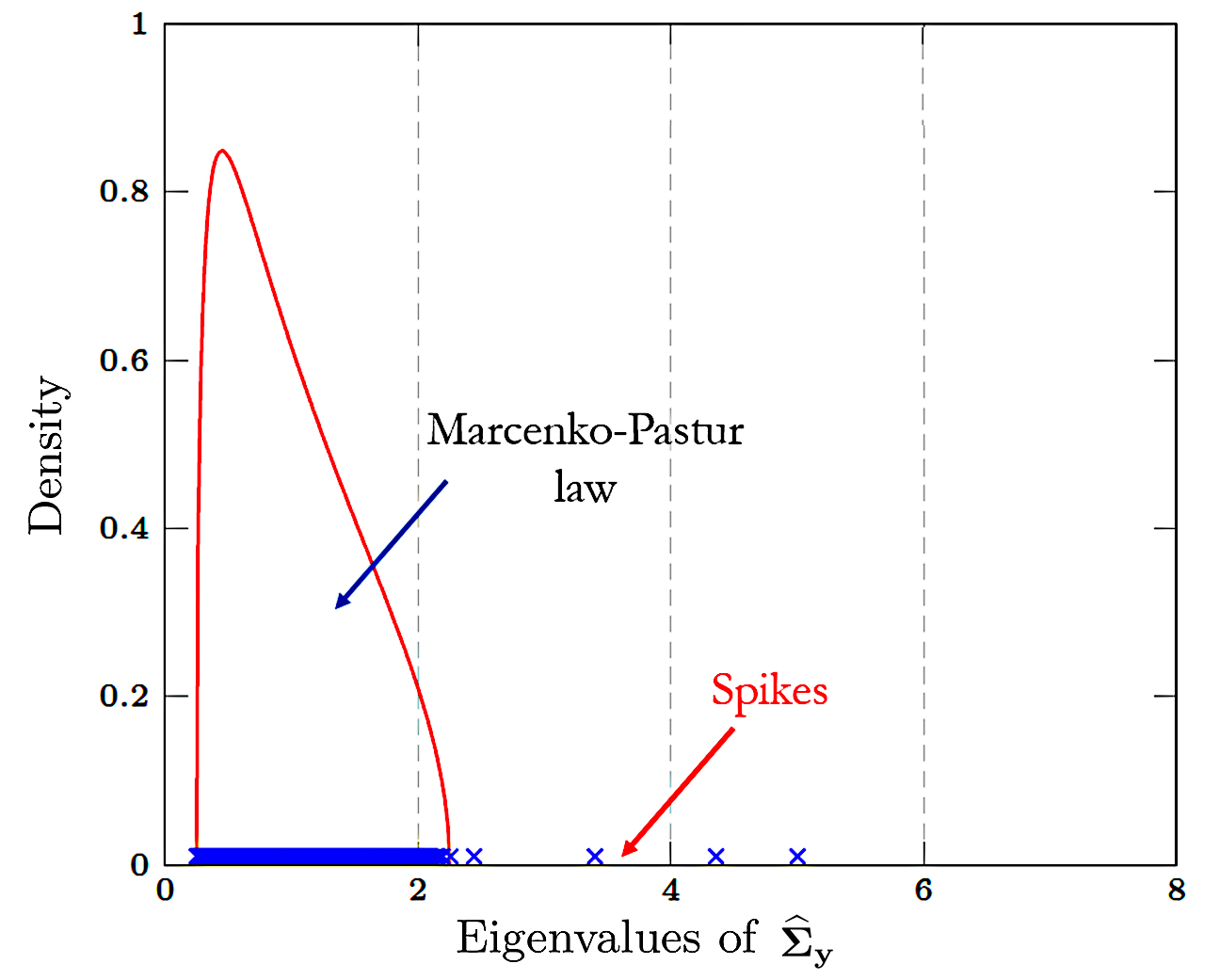}
	\end{subfigure}
	\caption{Top: Marcenko-Pastur law, Bottom: RMT spiked model. Figure due to \cite{Couillet2011}.}
	\label{fig:Spiked_model}
	\vspace*{-0.4 cm}
\end{figure}

Next, we describe the spiked model in RMT. 
Consider a matrix $\Ym = [\yv[1],\yv[2],\dots,\yv[T]]\in \RR^{M \times T}$ whose columns $\yv[1],\yv[2],\dots$ are drawn from the multivariate Gaussian distribution with zero mean and covariance matrix given by 
\begin{align}
\Sigmam_{\yv} =  \Id + \sum^N_{i = 1} \mu_i \uv_i \uv_i^H. \label{eqn:spike}
\end{align}
where $N$ is a fixed quantity. Compared to $\xv_i,$ the covariance matrix of $\yv_i$ is perturbed by $\sum^N_{i = 1} \mu_i \uv_i \uv_i^H.$ Under the spiked model, $\mu_1,\dots,\mu_N$ are referred to as  ``spike" eigenvalues. 
Of particular interest is a scenario when $M, T \to \infty, M/T = p$ and the number of spikes $N$ is small in comparison to $M$ and $T.$ The eigenvalues of $\Sigmam_{\yv}$ are given by  
\begin{align}
[\underbrace{\mu_1+1 ,\dots,\mu_N+1}_{N \ \text{terms}}, \underbrace{1, \dots, 1}_{M-N \ \text{terms}}]. \label{eqn:Evals}
\end{align}
Since $N$ is small compared to $M,$ we note that bulk of the eigenvalues of $\Sigmam_{\yv}$ are $1$ and a few eigenvalues exceed $1.$
Now, consider the eigenvalues of the sample covariance matrix
$\widehat{\Sigmam}_{\yv} = \frac{1}{T} \sum^T_{t = 1} \yv[t] \yv[t]^H.$
Since the bulk of the eigenvalues of $\Sigmam_{\yv}$ are $1,$  the majority of eigenvalues of $\widehat{\Sigmam}_{\yv}$ will lie within the MP distribution (i.e., between the extremities specified by $a$ and $b$). However, one would expect the ``leading $N$ eigenvalues" of $\widehat{\Sigmam}_{\yv}$ (corresponding to the eigenvalues $\mu_i+1$  of ${\Sigmam}_{\yv}$) to be found outside the distribution of the MP law (see Fig.~\ref{fig:Spiked_model}). Surprisingly, the number of eigenvalues that can be found outside $f$ depends critically on the ratio $p = M/T.$ This result was formalized in \cite{BaikSilver06} and stated here:
\begin{theorem}
	\label{thm:spike_result}
Consider $\yv[t] = \mathcal{N}(0,\Sigmam_{\yv}) \in \RR^{M \times 1},$ where $\Sigmam_\yv$ is defined in \eqref{eqn:spike}.
Let 
$\widehat{\Um} \widehat{\Lambdam} \widehat{\Um}^T$ denote the eigenvalue decomposition of $\widehat{\Sigmam}_{\yv},$ where $\widehat{\Um} = [\widehat{\uv}_1,\dots,\widehat{\uv}_M],$ and $\widehat{\Lambdam} = \text{diag} (\widehat{\lambda}_1,\dots,\widehat{\lambda}_M).$ Assume $N$ is fixed, and independent of $M$ and $T.$
Then, when  $M,T \to \infty, M/T = p,$  for all $\mu_i > \sqrt{p},$ with probability one, $\widehat{\lambda}_i \geq  (1+\sqrt{p_N})^2$ 
and 
\begin{align}
\Big{|} \widehat{\lambda}_i - 1 - \mu_i - \frac{p (1+\mu_i)}{\mu_i} \Big{|} \asto 0.
\end{align}
Moreover,  $| \mu_i - \widehat{\mu}_i| \asto 0,$ where $\widehat{\mu}_i$ can be obtained from $\widehat{\lambda}_i$ as
\begin{align}
\widehat{\mu}_i = \frac{\widehat{\lambda}_i + 1 - p + \sqrt{(\widehat{\lambda}_i +1 - p)^2 - 4 \widehat{\lambda}_i}}{2} - 1. \label{eqn:mu_hat_defn}
\end{align}
Further, for all $\mu_i > \sqrt{p},$ we also have 
\begin{align}
\Big{|} \widehat{\bf u}_i^T{\bf u}_j{\bf u}_j^{T}\widehat{\bf u}_i-  \frac{1 -p/\mu_i^2}{1+ p/\mu_i} \delta_{i=j} \Big{|} \asto 0, i,j = 1,\dots,s. \label{eqn:proj_rmt}
\end{align}
\end{theorem}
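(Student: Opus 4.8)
\emph{Proof proposal.} The plan is to collapse the $M$-dimensional eigenproblem into a fixed $N\times N$ ``secular'' equation and then pass to the limit with Marčenko--Pastur concentration estimates. First I would write $\yv[t]=\Sigmam_{\yv}^{1/2}\xv[t]$ with $\xv[t]\sim\mathcal N(\zerov,\Id)$ i.i.d., so that $\widehat{\Sigmam}_{\yv}=\Sigmam_{\yv}^{1/2}\Sm\,\Sigmam_{\yv}^{1/2}$, where $\Sm=\frac1T\sum_{t=1}^T\xv[t]\xv[t]^T$ is the pure-noise sample covariance and $\Sigmam_{\yv}=\Id+\Pm$ with $\Pm=\Um_N\Mm\Um_N^T$, $\Mm=\diag(\mu_1,\dots,\mu_N)$, $\Um_N=[\uv_1,\dots,\uv_N]$. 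Since $\widehat{\Sigmam}_{\yv}$ is similar to $\Sm(\Id+\Pm)$, the two share the same spectrum, and by the Marčenko--Pastur theory recalled above the empirical spectral distribution of $\Sm$ converges a.s.\ to $f$, with asymptotically no eigenvalue outside $[a,b]$. The two facts I would rely on throughout are the deterministic equivalents: for fixed unit vectors and $z$ in a compact subset of $\CC\setminus[a,b]$,
\begin{align*}
\uv_j^T(\Sm-z\Id)^{-1}\uv_k\asto m(z)\,\delta_{jk},\qquad \uv_j^T(\Sm-z\Id)^{-2}\uv_k\asto m'(z)\,\delta_{jk},
\end{align*}
$m$ being the Stieltjes transform of $f$; these follow from concentration of quadratic forms and the rank-one resolvent identity, with uniformity in $z$ supplied by a normal-families argument.

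For the outlying eigenvalues, I would note that for $\lambda$ not an eigenvalue of $\Sm$, $\lambda$ is an eigenvalue of $\widehat{\Sigmam}_{\yv}$ iff $\det(\Sm(\Id+\Pm)-\lambda\Id)=0$; dividing by $\det(\Sm-\lambda\Id)$, applying the Sylvester determinant identity, and using $(\Sm-\lambda\Id)^{-1}\Sm=\Id+\lambda(\Sm-\lambda\Id)^{-1}$ turns this into
\begin{align*}
\det\!\big(\Id_N+\Mm+\lambda\,\Mm\,\Um_N^T(\Sm-\lambda\Id)^{-1}\Um_N\big)=0.
\end{align*}
Letting $M,T\to\infty$ and substituting the deterministic equivalent, the limiting equation factorizes as $\prod_{i=1}^N\big(1+\mu_i+\lambda\,\mu_i\,m(\lambda)\big)=0$, so each spike with $\mu_i>\sqrt{p}$ contributes exactly one outlier $\widehat{\lambda}_i$ converging a.s.\ to the unique root $\rho_i>b$ of $1+\lambda m(\lambda)=-1/\mu_i$. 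Inserting the closed form of $m$ reduces this to a quadratic whose relevant root is $\rho_i=1+\mu_i+p(1+\mu_i)/\mu_i$, and inverting the same quadratic expresses $\widehat{\mu}_i$ through $\widehat{\lambda}_i$, which is \eqref{eqn:mu_hat_defn}. Since $\psi(\ell):=\ell+p\ell/(\ell-1)$ is strictly increasing on $(1+\sqrt{p},\infty)$ with $\psi(1+\sqrt{p})=(1+\sqrt{p})^2$, I get $\rho_i>(1+\sqrt{p})^2$ and hence $\widehat{\lambda}_i\ge(1+\sqrt{p})^2$ asymptotically a.s.; that the $i$-th \emph{ordered} sample eigenvalue is exactly the one tracking $\mu_i$, and that nothing spurious escapes $[a,b]$, is the exact-separation property, which I would import from \cite{BaikSilver06}.

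For the eigenvectors, I would write the eigen-relation as $\Sm(\Id+\Pm)\wv=\widehat{\lambda}_i\wv$ with $\wv=(\Id+\Pm)^{-1/2}\widehat{\uv}_i$, normalized by $\wv^T(\Id+\Pm)\wv=1$. Setting $\gv=\Mm\,\Um_N^T\wv\in\RR^N$ gives $\wv=-(\Sm-\widehat{\lambda}_i\Id)^{-1}\Sm\,\Um_N\gv$, and re-inserting this into the definition of $\gv$ shows $\gv$ belongs to the kernel of $\Id_N+\Mm+\widehat{\lambda}_i\Mm\,\Um_N^T(\Sm-\widehat{\lambda}_i\Id)^{-1}\Um_N$; by the previous step and the assumed distinctness of the $\mu_i$, $\gv$ concentrates, up to scale, on its $i$-th coordinate. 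Then I would expand $\uv_j^T\widehat{\uv}_i=\sqrt{1+\mu_j}\,\uv_j^T\wv$ and $1=\|\wv\|^2+\gv^T\Mm^{-1}\gv$, express $\uv_j^T\wv$ and $\|\wv\|^2$ through the resolvent bilinear forms at $\rho_i$, and obtain $\uv_j^T\widehat{\uv}_i\to0$ for $j\neq i$ together with
\begin{align*}
|\uv_i^T\widehat{\uv}_i|^2\asto\frac{(1+\mu_i)\,(1+\rho_i m(\rho_i))^2}{1+2\rho_i m(\rho_i)+\rho_i^2 m'(\rho_i)+1/\mu_i}.
\end{align*}
Substituting $1+\rho_i m(\rho_i)=-1/\mu_i$ together with the explicit values of $m(\rho_i)$ and $m'(\rho_i)$ and simplifying collapses the right-hand side to $(1-p/\mu_i^2)/(1+p/\mu_i)$, which is exactly \eqref{eqn:proj_rmt} (the $j\ne i$ case being the vanishing off-diagonal term).

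The hard part will be the two inputs behind the first step: the a.s., uniform-in-$z$ convergence of the bilinear resolvent forms to $m^{(k-1)}(z)$ valid all the way down to a neighbourhood of the outlier location --- which can approach the bulk edge $b$ as $\mu_i\downarrow\sqrt{p}$, where crude resolvent bounds degenerate --- and the accompanying exact-separation statement ruling out spurious outliers and pinning the ordering. These are precisely the deep theorems of \cite{BaikSilver06}; once they are granted, everything that remains is the determinant/Sylvester bookkeeping and the scalar simplifications sketched above. A secondary issue to handle carefully is coincident spikes: when several $\mu_i$ agree, the statement ``$\gv$ concentrates on coordinate $i$'' must be upgraded to one about the whole associated eigenspace, and the overlap is then read off the corresponding spectral projector.
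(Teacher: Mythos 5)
Your proposal is sound in outline, but note that the paper does not prove Theorem~\ref{thm:spike_result} at all: it is quoted as a known result, with the eigenvalue phase-transition and limit \eqref{eqn:mu_hat_defn} attributed to \cite{BaikSilver06} and the eigenvector overlap \eqref{eqn:proj_rmt} to \cite{Paul07}, so there is no in-paper argument to compare against line by line. What you sketch is the standard determinant-based route (closer in spirit to the Benaych-Georges--Nadakuditi master-equation technique than to the original Stieltjes-transform/exact-separation machinery of \cite{BaikSilver06}): reduce to $\Sm(\Id+\Pm)$, derive the $N\times N$ secular equation via Sylvester's identity, plug in the deterministic equivalents $\uv_j^T(\Sm-z\Id)^{-k}\uv_k$, and invert $\psi(\ell)=\ell+p\ell/(\ell-1)$ to get both the outlier location $1+\mu_i+p(1+\mu_i)/\mu_i$, the threshold $(1+\sqrt{p})^2$ (your computation also implicitly corrects the paper's typo ``$(1+\sqrt{p_N})^2$''), and the estimator \eqref{eqn:mu_hat_defn}; your eigenvector bookkeeping ($\wv=(\Id+\Pm)^{-1/2}\widehat{\uv}_i$, $\gv=\Mm\Um_N^T\wv$, the normalization $1=\|\wv\|^2+\gv^T\Mm^{-1}\gv$, and $(1+\rho_i m(\rho_i))=-1/\mu_i$) is internally consistent and does collapse to $(1-p/\mu_i^2)/(1+p/\mu_i)$. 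The one caveat is that this is a reduction rather than a complete proof: the almost-sure, uniform-in-$z$ convergence of the bilinear resolvent forms near the bulk edge and the exact-separation/ordering statement --- which you correctly identify as the hard inputs --- are exactly the content of \cite{BaikSilver06} and \cite{Paul07}, so your argument ultimately rests on the same citations the paper uses; that is acceptable here, since the theorem is imported background in this paper. Your remark about coincident spikes is also consistent with the paper, which implicitly assumes distinct spike values in \eqref{eqn:muvec}.
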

The main idea of Theorem~\ref{thm:spike_result} is the following. Consider 
\begin{align}
\mu_1 > \mu_2 > \dots > \mu_s > \sqrt{p},  \qquad 1 \leq s \leq N \label{eqn:muvec},
\end{align}
where $s \leq N$ is the number of spike eigenvalues that are greater than $\sqrt{p}.$ Then, the result \cite{BaikSilver06} states that for all $\mu_i > \sqrt{p},$ when $M,T \to \infty, M/T = p > 0$, there exists a deterministic and one-to-one mapping between eigenvalue of the sample covariance matrix ($\widehat{\Sigmam}_{\yv}$), i.e.,  between $\widehat{\lambda}_i$ and $\mu_i.$ In other words, all $\mu_i$ which satisfy $\mu_i > \sqrt{p}$ can be recovered from the eigenvalues of the sample covariance matrix. A similar result also holds for estimating the corresponding eigenvectors \cite{Paul07}, i.e., the corresponding eigenvectors (for which $\mu_i > \sqrt{p}$) can be reliably recovered from the eigenvectors of the sample covariance matrix (see Theorem \ref{thm:spike_result}). It is important to note that for eigenmodes corresponding to $\mu_i < \sqrt{p},$ these relationships do not hold, and the corresponding eigenvalue/vectors cannot be recovered. Thus, the quantity $\sqrt{p}$ represents a fundamental ``phase transition'' point in estimating the spike eigenvalues/vectors from the sample covariance matrix.

\subsection{Application of Spiked Model To Data-Driven FDI Attack} 
We now discuss the application of RMT spiked model results to our problem. From \eqref{eqn:eqzt}, the covariance matrix of the measurements $\Sigmam_{\zv}$ can be expressed through eigen decomposition as
\begin{align}
\Sigmam_{\zv} =  \Id + \sum^N_{i = 1} \mu_i \uv_i \uv_i^H.
\end{align}
where $ \{ \mu_i \}^N_{i = 1} $ denote the eigenvalues of $\sigma^2_{\theta} \Hm  \Hm^H$ and $ \{ \uv_i \}^N_{i = 1} $ the corresponding eigenvectors. The result of RMT spiked model is important in the context of data-driven FDI attack, since it precisely characterizes the information about $Col(\Hm)$ that the attacker can recover from the measurements as a function of the observation time window $T$ (specifically, the ratio $p = M/T$). To construct a data-driven FDI that can bypass the BDD with a high probability, the attacker must first estimate the number of eigenvalues/vectors, $s,$ that can be reliably recovered from the measurements $ \{ \zv[t] \}^T_{t = 1}$. Note that the attacker cannot directly use \eqref{eqn:muvec} to determine $s,$ since the value of $\mu_i$ is not known. Using the result of Theorem \ref{thm:spike_result}, it follows that for $\mu_i > \sqrt{p},$ with probability $1,$ we have $\widehat{\lambda}_i >  (1+\sqrt{p})^2$. 
Thus the attacker can determine $s$ by counting the number of eigenvalues of the sample covariance matrix that exceed $ (1+\sqrt{p})^2$ 
, i.e., 
\begin{align}
s = \{ \#i,  \widehat{\lambda}_i  >  (1+\sqrt{p})^2 \}. \label{eqn:num_eig}
\end{align}
Note that the direct application of the subspace estimation algorithm as proposed in \cite{Kim2015,YuBlind2015,ChinBlindAC2018} (Algorithm 1) uses all $N$ estimated eigenmodes for the construction of the FDI attack. However, following the application of RMT spiked model results, it is clear that eigenmodes for which $\mu_i < \sqrt{p}$ cannot be recovered from the sample covariance matrix, and hence, must not be used in the construction of FDI attack. 


After determining $s,$ the attacker can construct a data-driven FDI attack as $\av = \widehat{\Um}_s \cv_s,$ where  $\cv_s \in \RR^s$ denotes an $s-$dimensional vector. In particular, the vector $\cv_s$ can be tuned by the attacker to achieve his objectives, such as minimizing the attack's detection probability or causing the desired attack impact. In the rest of this section, we describe how the attacker can achieve these objectives using results from RMT. 

We first focus on attack detection probability. To this end, we characterize the BDD residual with a data-driven FDI attack. 
\begin{lemma}
	\label{lem:proj_diag}
	For a data-driven FDI attack $\av = \widehat{\Um}_{s} \cv_s,$ the BDD residual with attack, $r_a,$ follows a   non-central $\chi^2$ distribution with $M-N$ degrees of freedom and a non-centrality parameter $\nu$ given by 
	\begin{align}
	\nu = {\bf c}_s^{T}{\bf c}_
s-{\bf c}_s^{T} \widehat{\bf U}_s^{T}{\bf U}_N{\bf U}_N^{T}\widehat{\bf U}_s{\bf c}_s. \label{eqn:nu}
	\end{align}
	 For $M,T \to \infty, M/T = p$ the second term of the right hand side of $\eqref{eqn:nu}$ converges to 
	\begin{align}
	{\bf c}_s^{T} \widehat{\bf U}_s^{T}{\bf U}_N{\bf U}_N^{T}\widehat{\bf U}_s{\bf c}_s  - {\bf c}_s^{T}  \Omegam_s  {\bf c}_s \asto 0, \label{eqn:nu_hat}
	\end{align}
	where $\Omegam_s = \text{diag} (\omega_1,\dots,\omega_s)$ and
	\begin{align}
	\omega_i = \frac{1 -p/\mu_i^2}{1+ p/\mu_i}, i = 1,\dots,s. \label{eqn:omega_def}
	\end{align}
	Further, the attacker can obtain a consistent estimator $\widehat{\omega}_i$ of  ${\omega}_i$ as 
	$| \omega_i - \widehat{\omega}_i| \asto 0, i = 1,\dots,s$ where,
	\begin{align}
	\widehat{\omega}_i &= \frac{1 -p/\widehat{\mu}_i^2}{1+ p/ \widehat{\mu}_i}, i = 1,\dots,s, \label{eqn:omega_hat} 
	\end{align}
	and $\widehat{\mu}_i$ as in \eqref{eqn:mu_hat_defn}.
Thus, it follows that
\begin{align}
\nu  - {\bf c}_s^{T}  (\Id - \widehat{\Omegam}_s)  {\bf c}_s \asto 0, \label{eqn:nu_hat_final}
\end{align}
where $\widehat{\Omegam}_s = \text{diag} (\widehat{\omega}_1,\dots,\widehat{\omega}_s).$
\end{lemma}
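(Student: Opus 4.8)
The plan is to first establish the exact distribution and then pass to the limit using Theorem~\ref{thm:spike_result}. Write $\Pm^{\perp} = \Id - \Hm(\Hm^{T}\Hm)^{-1}\Hm^{T}$ for the orthogonal projector onto $Col(\Hm)^{\perp}$, so that the residual under attack is $r_a = \|\Pm^{\perp}\zv_a\|^2$. Since $\zv_a = \Hm\thetav + \nv + \av$ and $\Pm^{\perp}\Hm = \zerov$, the signal term drops out and $r_a = \|\Pm^{\perp}\nv + \Pm^{\perp}\av\|^2$. Treating $\av = \widehat{\Um}_s\cv_s$ as fixed (the attacker computes it once, and the residual is evaluated on a measurement whose noise is independent of $\av$ --- exactly so for measurements taken after the estimation window, and up to an $O(1/T)$ perturbation within it), I would rotate $\nv$ into an orthonormal basis $\Qm$ of $Col(\Hm)^{\perp}$: since $\Qm^{T}\nv \sim \mathcal{N}(\zerov,\Id_{M-N})$ and $\Pm^{\perp}\av$ lies in the range of $\Qm$, $r_a = \|\Qm^{T}\nv + \Qm^{T}\av\|^2$ is non-central $\chi^2$ with $M-N$ degrees of freedom and non-centrality $\|\Pm^{\perp}\av\|^2 = \av^{T}\Pm^{\perp}\av$. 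Form \eqref{eqn:nu} then follows from $\Hm(\Hm^{T}\Hm)^{-1}\Hm^{T} = \Um_N\Um_N^{T}$ (the top-$N$ eigenvectors of $\Sigmam_{\zv} = \sigma^2_{\theta}\Hm\Hm^{T}+\Id$ span $Col(\Hm)$) together with $\widehat{\Um}_s^{T}\widehat{\Um}_s = \Id_s$.

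For the convergence \eqref{eqn:nu_hat}, I would set $\Bm = \widehat{\Um}_s^{T}\Um_N\Um_N^{T}\widehat{\Um}_s \in \RR^{s\times s}$, with entries $B_{ij} = \sum_{k=1}^{N}(\widehat{\uv}_i^{T}\uv_k)(\widehat{\uv}_j^{T}\uv_k)$. On the diagonal, isolating the $k=i$ term gives $\omega_i$ in the limit by \eqref{eqn:proj_rmt}, while every other term with $k\le N$, $k\ne i$ is the squared overlap of a supercritical sample eigenvector with a different population eigendirection and tends to $0$ almost surely, so $B_{ii}\asto\omega_i$. Off the diagonal, Cauchy--Schwarz bounds each summand by the geometric mean of two such vanishing squared overlaps, so $B_{ij}\asto 0$ for $i\ne j$. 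Since $N$ and $s$ are fixed these are finite sums of almost-surely convergent terms, hence $\Bm\asto\Omegam_s$, and sandwiching by the fixed $\cv_s$ yields \eqref{eqn:nu_hat}. The consistency of $\widehat{\omega}_i$ follows by feeding $\widehat{\mu}_i\asto\mu_i$ (Theorem~\ref{thm:spike_result}) through the map $x\mapsto(1-p/x^2)/(1+p/x)$, which is continuous at $x=\mu_i>\sqrt{p}>0$, so $\widehat{\omega}_i\asto\omega_i$ and $\widehat{\Omegam}_s\asto\Omegam_s$. Finally $\nu = \cv_s^{T}\cv_s - \cv_s^{T}\Bm\cv_s \asto \cv_s^{T}(\Id-\Omegam_s)\cv_s$ and $\cv_s^{T}(\Id-\widehat{\Omegam}_s)\cv_s\asto\cv_s^{T}(\Id-\Omegam_s)\cv_s$ as well, so their difference tends to $0$, which is \eqref{eqn:nu_hat_final}.

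The $\chi^2$ bookkeeping and the continuous-mapping step are routine; I expect the main obstacle to be the asymptotic orthogonality invoked above between a supercritical sample eigenvector $\widehat{\uv}_i$ ($i\le s$) and the \emph{remaining} population eigenvectors $\uv_k$, $k\ne i$ --- in particular those with $k>s$ corresponding to subcritical spikes $\mu_k\le\sqrt{p}$, which are not covered by \eqref{eqn:proj_rmt} as literally stated. Handling this requires the full spiked-model characterization of the sample eigenvector, namely that $\widehat{\uv}_i$ asymptotically decomposes into $\sqrt{\omega_i}\,\uv_i$ plus a component lying in the noise subspace $Col(\Hm)^{\perp}$; I would invoke this from \cite{Paul07} and companion results rather than reprove it. A secondary point needing a brief remark is the independence of the estimated attack vector from the noise entering the residual, which is exact for a post-estimation measurement and an $O(1/T)$ effect otherwise.
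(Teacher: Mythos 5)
Your proposal is correct and follows essentially the same route as the paper's Appendix~A (Part~I): express the attacked residual through the projector $\Id-\Um_N\Um_N^{T}$ onto $Col(\Hm)^{\perp}$ to get the non-central $\chi^2$ law with non-centrality $\av^{T}(\Id-\Um_N\Um_N^{T})\av$, then apply the spiked-model convergence of Theorem~\ref{thm:spike_result} together with the continuous mapping theorem for $\widehat{\omega}_i$. If anything, you are more careful than the paper on two points it leaves implicit --- the overlaps $\widehat{\uv}_i^{T}\uv_k$ for subcritical indices $k>s$, which \eqref{eqn:proj_rmt} as stated does not cover and which indeed require the fuller eigenvector characterization of \cite{Paul07}, and the conditioning/independence of the noise in the attacked measurement from $\widehat{\Um}_s$ --- so no gap remains.
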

\begin{proof}
The proof is omitted due to the lack of space and can be found in Appendix~A, Part I.
\end{proof}
The result of \eqref{eqn:nu} along with the asymptotic approximations \eqref{eqn:nu_hat}-\eqref{eqn:omega_hat} provides a tractable expression for the attacker to compute the detection probability for a given data-driven FDI attack $\av = \widehat{\Um}_{s} \cv_s$. Observe that all the quantities required to compute the asymptotic approximation of $\nu$ depend on the estimated parameters only (i.e., $\widehat{\lambda}_i$). Using these expressions, the attacker can tune $\cv_s$ to minimize the attack's detection probability. We analyze the results further. 

Note that the entries of the matrix $\widehat{\bf U}_s^{T}{\bf U}_N{\bf U}_N^{T}\widehat{\bf U}_s$ represent the projection of the eigenvectors of the sample covariance matrix $\widehat{\Sigmam}_{\zv}$ onto the eigenvectors of the population covariance matrix ${\Sigmam}_{\zv}$. In particular, the result of Lemma~\ref{lem:proj_diag} states that asymptotically, the estimated eigenvectors 
$\widehat{\uv}_i$ are orthogonal to $\uv_j, j \neq i$, since $\Omegam_s$ is diagonal. Specifically, $|\widehat{\uv}^T_i \uv_i|^2 \asto \omega_i$ and $|\widehat{\uv}^T_i \uv_j|^2 \asto 0, i \neq j.$ The following lemma illustrates the relationship between the projections.
\begin{lemma}
\label{lem:dec_order}
The diagonal elements $\{\omega_i \}_{i = 1}^s$ and  $\{\widehat{\omega}_i \}_{i = 1}^s$ follow
$1 > \omega_1 \geq \omega_2 \geq,\dots,\geq \omega_s > 0$ and $1 > \widehat{\omega}_1 \geq \widehat{\omega}_2 \geq,\dots,\geq \widehat{\omega}_s > 0$ respectively.
\end{lemma}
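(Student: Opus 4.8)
The plan is to reduce the entire statement to elementary monotonicity and boundedness of a single scalar function, since by \eqref{eqn:omega_def} and \eqref{eqn:omega_hat} we have $\omega_i = f(\mu_i)$ and $\widehat\omega_i = f(\widehat\mu_i)$ with
\begin{align*}
f(x) = \frac{1 - p/x^2}{1 + p/x} = \frac{x^2 - p}{x(x+p)}, \qquad x > \sqrt p .
\end{align*}
First I would establish $0 < f(x) < 1$ on $(\sqrt p,\infty)$: positivity is immediate from the rewritten form, since numerator and denominator are both strictly positive when $x > \sqrt p$; and $f(x) < 1$ is equivalent to $x^2 - p < x^2 + px$, i.e. $-p < px$, which always holds. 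I would also record $f(\sqrt p) = 0$. Next I would show that $f$ is strictly increasing on $(0,\infty)$ by a one-line derivative computation: $f'(x)$ has numerator $p(x^2 + 2x + p)$ over the positive denominator $(x^2+px)^2$, hence $f' > 0$.

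The claim for $\{\omega_i\}$ is then immediate. By \eqref{eqn:muvec} the spike eigenvalues satisfy $\mu_1 > \mu_2 > \cdots > \mu_s > \sqrt p$, so applying the strictly increasing map $f$ and using $f < 1$, $f(\sqrt p) = 0$ gives $1 > f(\mu_1) \ge f(\mu_2) \ge \cdots \ge f(\mu_s) > 0$, i.e. $1 > \omega_1 \ge \omega_2 \ge \cdots \ge \omega_s > 0$ (with strict inequalities, in fact).

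For $\{\widehat\omega_i\}$ I first need the analogous ordering $\widehat\mu_1 \ge \widehat\mu_2 \ge \cdots \ge \widehat\mu_s > \sqrt p$. The key observation is that the map \eqref{eqn:mu_hat_defn} from $\widehat\lambda_i$ to $\widehat\mu_i$ is precisely the functional inverse of $g(\mu) := 1 + \mu + p(1+\mu)/\mu = 1 + p + \mu + p/\mu$ (the deterministic map linking $\mu_i$ and $\widehat\lambda_i$ in Theorem~\ref{thm:spike_result}): solving $\lambda = g(\mu)$ for $\mu$ gives a quadratic whose larger root is exactly \eqref{eqn:mu_hat_defn}, after checking the identity $(\widehat\lambda + 1 - p)^2 - 4\widehat\lambda = (\widehat\lambda - 1 - p)^2 - 4p$. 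Since $g'(\mu) = 1 - p/\mu^2 > 0$ for $\mu > \sqrt p$ and $g(\sqrt p) = (1+\sqrt p)^2$, the map $g$ is strictly increasing on $(\sqrt p,\infty)$, so its inverse $\widehat\mu(\cdot)$ is strictly increasing on $((1+\sqrt p)^2,\infty)$ with $\widehat\mu((1+\sqrt p)^2) = \sqrt p$. By the definition of $s$ in \eqref{eqn:num_eig} we have $\widehat\lambda_i > (1+\sqrt p)^2$ for every $i \le s$, hence $\widehat\mu_i > \sqrt p$; and since the sample eigenvalues are ordered $\widehat\lambda_1 \ge \cdots \ge \widehat\lambda_s$, monotonicity of $\widehat\mu(\cdot)$ yields $\widehat\mu_1 \ge \cdots \ge \widehat\mu_s > \sqrt p$. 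Applying $f$ exactly as in the previous paragraph gives $1 > \widehat\omega_1 \ge \cdots \ge \widehat\omega_s > 0$.

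There is no serious obstacle here; the only points requiring care are (i) verifying algebraically that \eqref{eqn:mu_hat_defn} really is the inverse of $\mu \mapsto 1 + \mu + p(1+\mu)/\mu$, so that the correct branch of the square root is selected and $\widehat\mu(\cdot)$ inherits strict monotonicity, and (ii) invoking the definition of $s$ from \eqref{eqn:num_eig} to guarantee $\widehat\lambda_i > (1+\sqrt p)^2$ for $i \le s$ — this is precisely what places each $\widehat\mu_i$ strictly above the phase-transition threshold $\sqrt p$, which in turn keeps $\widehat\omega_i$ strictly positive and bounded away from $0$.
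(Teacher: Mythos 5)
Your proof is correct and follows essentially the same route as the paper's: show the map $\mu\mapsto\omega=\frac{1-p/\mu^2}{1+p/\mu}$ is valued in $(0,1)$ and increasing for $\mu>\sqrt p$, then push the ordering of the (estimated) spikes through it. The only difference is that you explicitly verify that \eqref{eqn:mu_hat_defn} is the increasing inverse of $\mu\mapsto 1+\mu+p(1+\mu)/\mu$ on $(\sqrt p,\infty)$, which rigorously yields $\widehat\mu_1\geq\cdots\geq\widehat\mu_s>\sqrt p$ from the ordering of the sample eigenvalues exceeding $(1+\sqrt p)^2$ — a step the paper leaves implicit when it says the hatted case follows by "identical arguments."
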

\begin{proof}
The proof can be found in Appendix~B.
\end{proof}
From Lemma~\ref{lem:proj_diag} and Lemma~\ref{lem:dec_order}, it follows that the projection of $\widehat{\uv}_i$ onto $\uv_i$ is in the decreasing order of the eigenmode index. Note that minimizing the detection probability is equivalent to minimizing the non-centrality parameter $\nu$ of the $\chi^2$ distribution. From \eqref{eqn:nu_hat_final}, it follows the attacker can compute ${\bf c}_s$ which minimizes ${\bf c}_s^{T} (\Id-\widehat{\Omegam}_s){\bf c}_s.$ 
However, directly minimizing this expression would result
in a trivial solution $\cv_s = {\bf 0},$  (i.e., a zero attack). Thus, we must constrain the attack impact in order to obtain a meaningful attack. 

We quantify the attack impact in terms of the second norm of the error in state estimate (for the system operator) due to the FDI attack. Specifically, we let $\widehat{\thetav}^a$ denote the estimate of the system state from measurements with FDI attack, $\zv^a.$ Then, 
$\Delta \thetav \defines \widehat{\thetav}-\widehat{\thetav}^a$ is the error in the state estimate due to the FDI attack. Using this, the data-driven FDI attack can be formulated as the following optimization problem:
\beqa
&\dsp \min_{\cv_s} &   \cv_s^T (\Id- \widehat{\Omegam}_s) \cv_s \label{eqn:opt_init} \\ 
& \text{s.t.} &  ||\Delta \thetav||^2_2  \geq \tau \nonumber 
\eeqa 
In the optimization problem \eqref{eqn:opt_init}, the attacker designs $\cv_s$ to minimize the probability of detection among all attacks that satisfy $||\Delta \thetav||^2_2  \geq \tau.$ However, \eqref{eqn:opt_init} cannot be solved by the attacker directly, as $||\Delta \thetav||^2_2$ depends on the measurement matrix $\Hm,$ that is unknown to the attacker. To address this issue, we present a consistent estimate of $||\Delta \thetav||^2_2$ in the large system regime that depends only on the attacker's estimated parameters  in the following lemma:
\begin{lemma}
	\label{lem:state_error}
For $M,T \to \infty,$ the quantity  $||\Delta \thetav||^2_2$ converges to
\begin{align}
||\Delta \thetav||^2_2 - \sigma^2_{\theta}\cv_s^T  \Mm^{-1} \Omegam \cv_s \asto 0.
\end{align}
Further, we have 
\begin{align}
\sigma^2_{\theta} \cv_s^T  \Mm^{-1} \Omegam \cv_s  - \widehat{\sigma}^2_{\theta} \cv_s^T  \widehat{\Mm}^{-1} \widehat{\Omegam} \cv_s \asto 0, \label{eqn:state_conv_det}
\end{align}
where $\Mm = \text{diag} (\mu_1,\dots,\mu_N) $ and $\widehat{\Mm} = \text{diag} (\widehat{\mu}_1,\dots,\widehat{\mu}_N).$
\end{lemma}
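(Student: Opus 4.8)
The plan is to first reduce $||\Delta\thetav||^2_2$ to an exact bilinear form in $\cv_s$ and the population eigenstructure of $\Hm$, and then to pass to the limit term by term via Theorem~\ref{thm:spike_result}, Lemma~\ref{lem:proj_diag}, and the spiked-model eigenvector result of \cite{Paul07}. Since $\widehat{\thetav}$ and $\widehat{\thetav}^a$ are obtained by applying the same linear estimator $(\Hm^T\Hm)^{-1}\Hm^T$ to $\zv$ and to $\zv^a=\zv+\av$, one has exactly $\Delta\thetav=-(\Hm^T\Hm)^{-1}\Hm^T\av$, hence $||\Delta\thetav||^2_2=\av^T\Hm(\Hm^T\Hm)^{-2}\Hm^T\av$. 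Writing the thin SVD $\Hm=\Um_N\Sigmam_{\Hm}\Vm^T$ and recalling that, by construction, $\mu_i$ and $\uv_i$ are the eigenvalues and eigenvectors of $\sigma^2_\theta\Hm\Hm^T=\Um_N(\sigma^2_\theta\Sigmam_{\Hm}^2)\Um_N^T$ (so $\sigma^2_\theta\Sigmam_{\Hm}^2=\Mm$), the $\Vm$ factors cancel and $\Hm(\Hm^T\Hm)^{-2}\Hm^T=\Um_N\Sigmam_{\Hm}^{-2}\Um_N^T=\sigma^2_\theta\,\Um_N\Mm^{-1}\Um_N^T$. Substituting $\av=\widehat{\Um}_s\cv_s$ gives the realization-wise identity $||\Delta\thetav||^2_2=\sigma^2_\theta\,\cv_s^T\big(\widehat{\Um}_s^T\Um_N\Mm^{-1}\Um_N^T\widehat{\Um}_s\big)\cv_s$.

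\textbf{Step 2 (entrywise limit of the inner matrix).} It then suffices to show that the $s\times s$ matrix $\Gm\defines\widehat{\Um}_s^T\Um_N\Mm^{-1}\Um_N^T\widehat{\Um}_s=\sum_{j=1}^N\mu_j^{-1}(\widehat{\Um}_s^T\uv_j)(\uv_j^T\widehat{\Um}_s)$ converges almost surely, entrywise, to $\diag(\omega_1/\mu_1,\dots,\omega_s/\mu_s)$, i.e.\ to the leading $s\times s$ block of $\Mm^{-1}\Omegam$ with $\omega_i$ as in \eqref{eqn:omega_def}; multiplying on both sides by the deterministic, bounded vector $\cv_s$ and by $\sigma^2_\theta$ then yields the first claimed convergence. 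The $(i,k)$ entry of $\Gm$ equals $\sum_{j=1}^N\mu_j^{-1}(\widehat{\uv}_i^T\uv_j)(\widehat{\uv}_k^T\uv_j)$. For $i=k\le s$, the $j=i$ term tends to $\omega_i/\mu_i$ by \eqref{eqn:proj_rmt}, and every term with $j\ne i$ vanishes because $\widehat{\uv}_i^T\uv_j\asto 0$ — which is \eqref{eqn:proj_rmt} for $j\le s$ and, for $s<j\le N$, follows from \cite{Paul07}: a recoverable (supercritical) sample eigenvector has asymptotically no component along the non-recoverable population signal directions $\uv_j$, $\mu_j<\sqrt p$. For $i\ne k$, Cauchy--Schwarz bounds each summand by $|\widehat{\uv}_i^T\uv_j|\,|\widehat{\uv}_k^T\uv_j|$, and since $j$ can match at most one of $i,k$, at least one factor is of the form $\widehat{\uv}_\ell^T\uv_j$ with $\ell\ne j$ and hence vanishes. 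As $N$ is fixed, these finitely many almost-sure limits hold jointly, so $\Gm-\diag(\omega_1/\mu_1,\dots,\omega_s/\mu_s)\asto 0$.

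\textbf{Step 3 (consistent plug-in).} For the second assertion, Theorem~\ref{thm:spike_result} gives $|\mu_i-\widehat{\mu}_i|\asto 0$ with $\widehat{\mu}_i$ from \eqref{eqn:mu_hat_defn}, and Lemma~\ref{lem:proj_diag} gives $|\omega_i-\widehat{\omega}_i|\asto 0$ with $\widehat{\omega}_i$ from \eqref{eqn:omega_hat}. Since $\cv_s^T\Mm^{-1}\Omegam\cv_s=\sum_{i=1}^s(\omega_i/\mu_i)\,c_{s,i}^2$ is a finite sum of products of scalars obtained as continuous functions ($\mu\mapsto 1/\mu$ and $\mu\mapsto(1-p/\mu^2)/(1+p/\mu)$ on $\{\mu>\sqrt p\}$) of the almost-surely convergent quantities $\widehat{\mu}_i$, and together with the consistency of $\widehat{\sigma}^2_\theta$, we obtain \eqref{eqn:state_conv_det}, which chained with Step~2 proves the lemma.

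\textbf{Expected obstacle.} The only step that is not bookkeeping is controlling the cross projections $\widehat{\uv}_i^T\uv_j$ with $i\le s<j\le N$ in Step~2: the projection identity \eqref{eqn:proj_rmt} quoted in the excerpt is stated only for indices within $\{1,\dots,s\}$, so one must invoke the finer part of the spiked-model eigenvector analysis of \cite{Paul07} to guarantee that the ``energy'' lost by a supercritical sample eigenvector is deposited in the noise bulk rather than along the sub-critical signal eigenvectors. Everything else reduces to elementary linear algebra and to combining finitely many almost-sure convergences.
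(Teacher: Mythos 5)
Your proposal is correct and follows essentially the same route as the paper's own proof: reduce $\|\Delta\thetav\|_2^2$ exactly to $\sigma^2_\theta\,\cv_s^T\bigl(\widehat{\Um}_s^T\Um_N\Mm^{-1}\Um_N^T\widehat{\Um}_s\bigr)\cv_s$ via $\Hm(\Hm^T\Hm)^{-2}\Hm^T=\sigma^2_\theta\,\Um_N\Mm^{-1}\Um_N^T$, then diagonalize the projection matrix asymptotically using the spiked-model eigenvector result, and finish with the plug-in estimates $\widehat{\mu}_i,\widehat{\omega}_i,\widehat{\sigma}^2_\theta$ by continuous mapping. Your Step~2 is in fact more careful than the paper, which dispatches the limit with a one-line ``similar to Lemma~1'' remark: you correctly flag that \eqref{eqn:proj_rmt} covers only indices $i,j\le s$, so the cross terms $\widehat{\uv}_i^T\uv_j$ with $i\le s<j\le N$ require the finer eigenvector delocalization statement from \cite{Paul07}, a detail the paper leaves implicit.
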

\begin{proof}
The proof is presented in Appendix~A, Part II. 
\end{proof}
Note from Lemma~\ref{lem:state_error} that the asymptotic approximation of $||\Delta \thetav||^2_2$ depends on the estimate of the variance of the system state $\widehat{\sigma}^2_{\theta}.$ The attacker can estimate this by monitoring historical fluctuations of the system load (note that this is a second-order statistic and hence, need not be estimated in real-time).

Based on the result of Lemma~\ref{lem:state_error}, optimization problem \eqref{eqn:opt_init} can be reformulated as follows:
\beqa
&\dsp \min_{\cv_s} &   \cv_s^T (\Id- \widehat{\Omegam}_s) \cv_s \label{eqn:opt_final} \\ 
& \text{s.t.} & \widehat{\sigma}^2_{\theta} \cv_s^T    \widehat{\Mm}^{-1} \widehat{\Omegam}_s \cv_s  \geq \tau \nonumber 
 \eeqa 
 The solution to \eqref{eqn:opt_final} can be characterized in closed form and its result leads to the following theorem:
\begin{theorem}
\label{thm:opt_attack}
For $M,T \to \infty, M/T = p,$ the optimal data-driven FDI attack that solves \eqref{eqn:opt_final}  is given by
\begin{align}
\av = \sqrt{\frac{\tau}{ \widehat{\sigma}^2_{\theta} (\widehat{\omega}_1/\widehat{\mu}_1)}} \widehat{\uv}_1.
\end{align}
\end{theorem}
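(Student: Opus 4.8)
The plan is to observe that the optimization \eqref{eqn:opt_final} is a diagonal quadratic program that collapses to a scalar comparison. Both $\Id-\widehat{\Omegam}_s$ and $\widehat{\Mm}^{-1}\widehat{\Omegam}_s$ (reading $\widehat{\Mm}^{-1}$ as its leading $s\times s$ block) are diagonal, and by Lemma~\ref{lem:dec_order} their entries $1-\widehat{\omega}_i$ and $\widehat{\omega}_i/\widehat{\mu}_i$ are strictly positive (positivity of $\widehat{\omega}_i$ forces $\widehat{\mu}_i>\sqrt{p}>0$). Writing $\cv_s=(c_1,\dots,c_s)^T$, the objective $\sum_i(1-\widehat{\omega}_i)c_i^2$ and the constraint $\widehat{\sigma}^2_{\theta}\sum_i(\widehat{\omega}_i/\widehat{\mu}_i)c_i^2\ge\tau$ depend on $\cv_s$ only through the $c_i^2$; setting $x_i=c_i^2\ge0$ therefore makes \eqref{eqn:opt_final} equivalent to the linear program of minimizing $\sum_i(1-\widehat{\omega}_i)x_i$ over $x_i\ge0$ subject to $\widehat{\sigma}^2_{\theta}\sum_i(\widehat{\omega}_i/\widehat{\mu}_i)x_i\ge\tau$, with the optimal $\cv_s$ recovered as $c_i=\sqrt{x_i}$.

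First I would note that the infimum is attained (the objective is coercive on the closed nonempty feasible set) and that the constraint must be active at any minimizer: if it were slack for some $\xv\ne\zerov$, shrinking $\xv$ toward $\zerov$ stays feasible and strictly decreases the positive objective, while $\xv=\zerov$ is infeasible since $\tau>0$. With the constraint tight, a Chebyshev-sum-type inequality gives, for every feasible $\xv$,
\begin{align*}
\sum_i(1-\widehat{\omega}_i)x_i &=\sum_i\frac{(1-\widehat{\omega}_i)\widehat{\mu}_i}{\widehat{\omega}_i}\cdot\frac{\widehat{\omega}_i}{\widehat{\mu}_i}x_i \\
&\ge\Big(\min_{1\le i\le s}\frac{(1-\widehat{\omega}_i)\widehat{\mu}_i}{\widehat{\omega}_i}\Big)\sum_i\frac{\widehat{\omega}_i}{\widehat{\mu}_i}x_i \\
&=\frac{\tau}{\widehat{\sigma}^2_{\theta}}\,\min_{1\le i\le s}\frac{(1-\widehat{\omega}_i)\widehat{\mu}_i}{\widehat{\omega}_i},
\end{align*}
with equality iff $\xv$ is supported on an index $i^\star$ attaining the minimum. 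Hence the optimum places $x_{i^\star}=\tau/(\widehat{\sigma}^2_{\theta}(\widehat{\omega}_{i^\star}/\widehat{\mu}_{i^\star}))$ and all other coordinates at zero, so $\av=\widehat{\Um}_s\cv_s=\sqrt{\tau/(\widehat{\sigma}^2_{\theta}(\widehat{\omega}_{i^\star}/\widehat{\mu}_{i^\star}))}\,\widehat{\uv}_{i^\star}$.

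It then remains to identify $i^\star=1$. Substituting $\widehat{\omega}_i=(1-p/\widehat{\mu}_i^2)/(1+p/\widehat{\mu}_i)=(\widehat{\mu}_i^2-p)/(\widehat{\mu}_i(\widehat{\mu}_i+p))$ gives $1-\widehat{\omega}_i=p(\widehat{\mu}_i+1)/(\widehat{\mu}_i(\widehat{\mu}_i+p))$ and hence $(1-\widehat{\omega}_i)\widehat{\mu}_i/\widehat{\omega}_i=\phi(\widehat{\mu}_i)$ with $\phi(m):=p(m^2+m)/(m^2-p)$. Since $\phi'(m)=-p(m^2+2mp+p)/(m^2-p)^2<0$ for $m>\sqrt{p}$, the map $\phi$ is strictly decreasing, so $\phi(\widehat{\mu}_i)$ is smallest at the largest $\widehat{\mu}_i$. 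The monotone bijection $\widehat{\lambda}_i\leftrightarrow\mu_i$ of Theorem~\ref{thm:spike_result} preserves the ordering $\widehat{\mu}_1>\widehat{\mu}_2>\dots>\widehat{\mu}_s>\sqrt{p}$ inherited from $\mu_1>\dots>\mu_s$, so the unique minimizer is $i^\star=1$, yielding the claimed $\av=\sqrt{\tau/(\widehat{\sigma}^2_{\theta}(\widehat{\omega}_1/\widehat{\mu}_1))}\,\widehat{\uv}_1$.

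The hard part will be this final step: distilling the vector optimization down to the scalar quantity $\phi(\widehat{\mu}_i)$ and checking its monotonicity, which is precisely what forces the whole attack onto the single leading estimated eigenvector instead of letting it spread across the estimated subspace; by comparison, the linear-programming reformulation and the active-constraint argument are routine bookkeeping.
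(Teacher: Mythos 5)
Your proof is correct and takes essentially the same route as the paper's Appendix~C: substitute $y_i=c_i^2$ to turn \eqref{eqn:opt_final} into a linear program, note the constraint must be active, and conclude that the optimum concentrates on the index minimizing the ratio $(1-\widehat{\omega}_i)\widehat{\mu}_i/\widehat{\omega}_i$, which by monotonicity in $\widehat{\mu}_i$ is $i=1$. The only (cosmetic) differences are that you solve the LP by a weighted-minimum bound instead of the paper's renormalization to a simplex, and you verify the monotonicity of $\phi(m)=p(m^2+m)/(m^2-p)$ explicitly --- correctly placing the smallest coefficient at $i=1$, whereas the paper only asserts the derivative's sign and its displayed inequality chain has the direction reversed.
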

\begin{proof}
The proof of Theorem~1 follows by noting that the solution to optimization problem \eqref{eqn:opt_final} is given by $c_1 = \sqrt{\frac{\tau}{ \widehat{\sigma}^2_{\theta} (\widehat{\omega}_1/\widehat{\mu}_1)}}$ and $c_2 = c_3= \dots = c_s = 0.$
The details are presented in Appendix~C.
\end{proof}
Theorem~\ref{thm:opt_attack} implies that the attacker can minimize the detection probability by aligning the attack vector along
 $\widehat{\uv}_1$ while achieving the desired attack impact. Coincidentally, from Lemma~\ref{lem:dec_order}, $\widehat{\uv}_1$ is also the most accurately estimated eigenmode. 

Theorem~\ref{thm:opt_attack} also implies that the optimal attack must be restricted to a $1-$dimensional subspace of the estimated space $\widehat{\Um}_N$. A natural question is whether there a cost to pay for this restriction? We will address this question in the next section where we consider the attack's sparsity in addition to the factors considered in this section.

\subsection{Discussion}
A few comments are in order regarding our results. First, we note from Theorem \ref{thm:spike_result} that the application of spiked model requires $N$ to be fixed and independent of $M$ and $T.$ Strictly speaking, the power grid model does not satisfy this condition, since $N$ (dimension of the state vector) also grows for large grids. However, despite this limitation, we will show by simulations in Section~\ref{sec:Sim_Res} that RMT spiked model results are accurate for various power grid bus configurations as long as the number of sensor measurements $M$ is large compared to $N.$ In other words, our results match closely when there are a significant number of redundant measurements, which is reasonable for the state estimation problem \cite{AburExposito2004, wood1996power}. Thus, the RMT spiked model can be used for analysing the data-driven FDI attacks. 

 Second, our results assume that the sensor measurement noises have identical variances, i.e., $\Sigmam_n = \mathbb{E} [\nv[t] \nv[t]^T] = \sigma^2 \Id.$ When they are not identical, i.e., when $\Sigmam_n =\diag [\sigma^2_1,\sigma^2_2,\dots,\sigma^2_M],$ then the attacker must modify the results (of Lemma~\ref{lem:proj_diag} and Lemma~\ref{lem:state_error}) using those from a generalized spiked model as in \cite{Honghetero2018}.

\section{Trade-offs in Data-Driven FDI Attacks}
\label{sec:Trade_off}
The analysis considered thus far in this paper only focusses on attacker's learning of $Col(\Hm)$. The learning phase only requires the attacker to obtain \emph{read} access to the sensor measurements. However, executing the FDI attack requires the attacker to modify the sensor measurements values, which in turn requires \emph{write} access. A graphical illustration is presented in Fig.~\ref{fig:readwrite}. Note that from an attacker's point of view, \emph{read} access to sensor measurements is easier to obtain compared to \emph{write} access, since it only involves passive sniffing of the network data, where as write access requires modification of the network packets. 
Thus, resource-constrained attackers may wish to minimize the number of sensors they must compromise to execute the FDI attack, or equivalently, maximize the attack's sparsity. 

\begin{figure}[!t] 
	\centering{\includegraphics[width=0.4\textwidth]{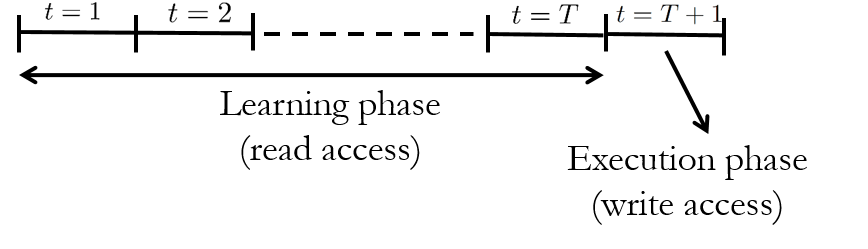}}
	\caption{Learning and execution phase of data-driven FDI attack.}
	\label{fig:readwrite}
	\vspace*{-0.6 cm}
\end{figure}
Recall that from our analysis in the previous section, the optimal data-driven FDI attack is one that is restricted to a $1-$ dimensional subspace of the estimated space. Restricting the attack to a lower-dimensional subspace makes it hard to enforce sparsity. Naturally, maximum sparsity of the attack vector can be achieved if we have an unconstrained
choice of the attack vector over the full estimated column space. On the other hand, using the inaccurately estimated basis vectors will increase the attack's detection probability (recall that the first basis is estimated most accurately, followed by the second, etc. refer to Lemma~\ref{lem:dec_order}). Thus, the attacker faces a fundamental trade-off between the attack's BDD-bypass probability and the attack's sparsity. In particular, the number of estimated eigenmodes for constructing the FDI attack must be chosen to balance between the two factors.

To formalize this trade-off, we cast the FDI attack construction as a sparse optimization problem for all $m = 1,\dots,s$ as follows:
\beqa
K^*_m = &\dsp \min_{\cv_m} &   \| \widehat{\Um}_{m} \cv_m \|_0, \label{eqn:opt_sparse} \\ 
& \text{s.t.} &  \cv_m^T    \widehat{\Mm}^{-1} \widehat{\Omegam} \cv_m \geq \tau\nonumber.
 \eeqa 
The objective function of \eqref{eqn:opt_sparse} gives the number of non-zero elements in the FDI attack vector while restricting
the attack vector to an $m-$dimensional subspace of the estimated column space, where $m \leq s$ (\eqref{eqn:num_eig}). 
The optimization problem \eqref{eqn:opt_sparse} can 
be solved using an $l_1-$relaxation based approach. We omit the details here and refer the reader to \cite{KimPoor2011}.
We illustrate the trade-off by simulations in Section~\ref{sec:Sim_Res}.

\section{Simulation Results}
\label{sec:Sim_Res}
In this section, we present the simulation results. 
All the simulations are conducted using the MATPOWER simulator \cite{Zimmerman2011}. Unless stated otherwise, the simulations are conducted using an IEEE-14 bus system considering the DC power flow model of \eqref{eqn:eqzt}. As in standard DC state estimation, we consider the forward and reverse power flows and nodal power injection measurements. For the IEEE-14  bus system, the number of measurements $ M = 2 \times L + N  = 54$ (recall $L$ is the number of links and $N$ is the number of nodes). We consider Gaussian noise with standard deviation $\sigma_n = 0.02$ pu (approximately $1-2 \% $ of the full-scale measurement).
The eigenmodes are estimated following the steps 1 and 2 of Algorithm~1. 
The measurement data is generated according to \eqref{eqn:eqzt}, where $\thetav[t] = \bar{\thetav} + \epsilonv[t].$ Here in,  $\bar{\thetav}$ is obtained by solving the optimal power flow formulation considering base load values provided in the MATPOWER case file. The fluctuations  $\epsilonv[t]$ are assumed to be i.i.d. Gaussian random vectors with standard deviation $\sigma_{\theta} = 0.002$ pu (i.e., $\sigma_{\theta}/\sigma_n = 0.1$).
The FDI attacks are constructed using the estimated eigenmodes and their detection probability is computed by averaging the BDD's detection results over $1000$ independent trials. The BDD threshold is adjusted such that the FP rate is set to $0.02$. The results are presented next.

\subsection{Eigenmode Estimation Accuracy}
First, we examine the estimation accuracy of different eigenmodes by evaluating the projection metric $|\widehat{\uv}^T_i \uv_i|^2$. We also verify the accuracy of the RMT approximation in  Theorem~\ref{thm:spike_result}. To this end, we compare $|\widehat{\uv}^T_i \uv_i|^2$ obtained from simulations with $\widehat{\omega}_i$ computed according to \eqref{eqn:omega_hat}. For the measurement time window $T,$ we consider two regimes, (i) a non-asymptotic regime with $T = 0.5 M$ and (ii) an asymptotic regime with $T = 100M.$ 
The results are plotted in Fig.~\ref{fig:proj_est} by averaging across $1000$ trials. The bars represent mean values over the trials and the vertical lines (on top of the bars) represent the fluctuation around this mean value. We make the following observations.

Firstly, in the non-asymptotic regime, the estimation accuracies of the different eigenmodes vary. In particular, they are arranged in the decreasing order of the eigenmode index. This is consistent with our observation in Lemma~\ref{lem:dec_order}. 
Secondly, it can be observed that the RMT approximations $\widehat{\omega}_i$ (blue bars) are reasonably accurate, though there is a non-zero but negligible gap between the simulations and RMT results. The gap exists due to the fact that the number of spikes in the power grid model are large and equal to the dimension of the state vector (see the discussion in Section IV-A). However, despite this limitation, the gap is small and the RMT results are a good approximation. 
Thirdly, recall that RMT approximations only exist for the eigenmodes $i \leq s$ where $s$ is computed according to \eqref{eqn:nu_hat}. The value of $s$ for each of the simulation cases is indicated in the figure description. It can be observed that the estimation accuracy for eigenmodes beyond this value of $s$ is poor. Hence, they must not be utilized for FDI attack construction as prescribed by our analysis based on the RMT spiked model. 
Finally, we observe that in the asymptotic regime however, i.e.  $T = 100M$, all the eigenmodes can be estimated with a high accuracy and Algorithm~1 can be used directly for the design of FDI attack.

\begin{figure}[!t]
	\centering
	\begin{subfigure}{0.45\textwidth}
		\includegraphics[width=1\textwidth]{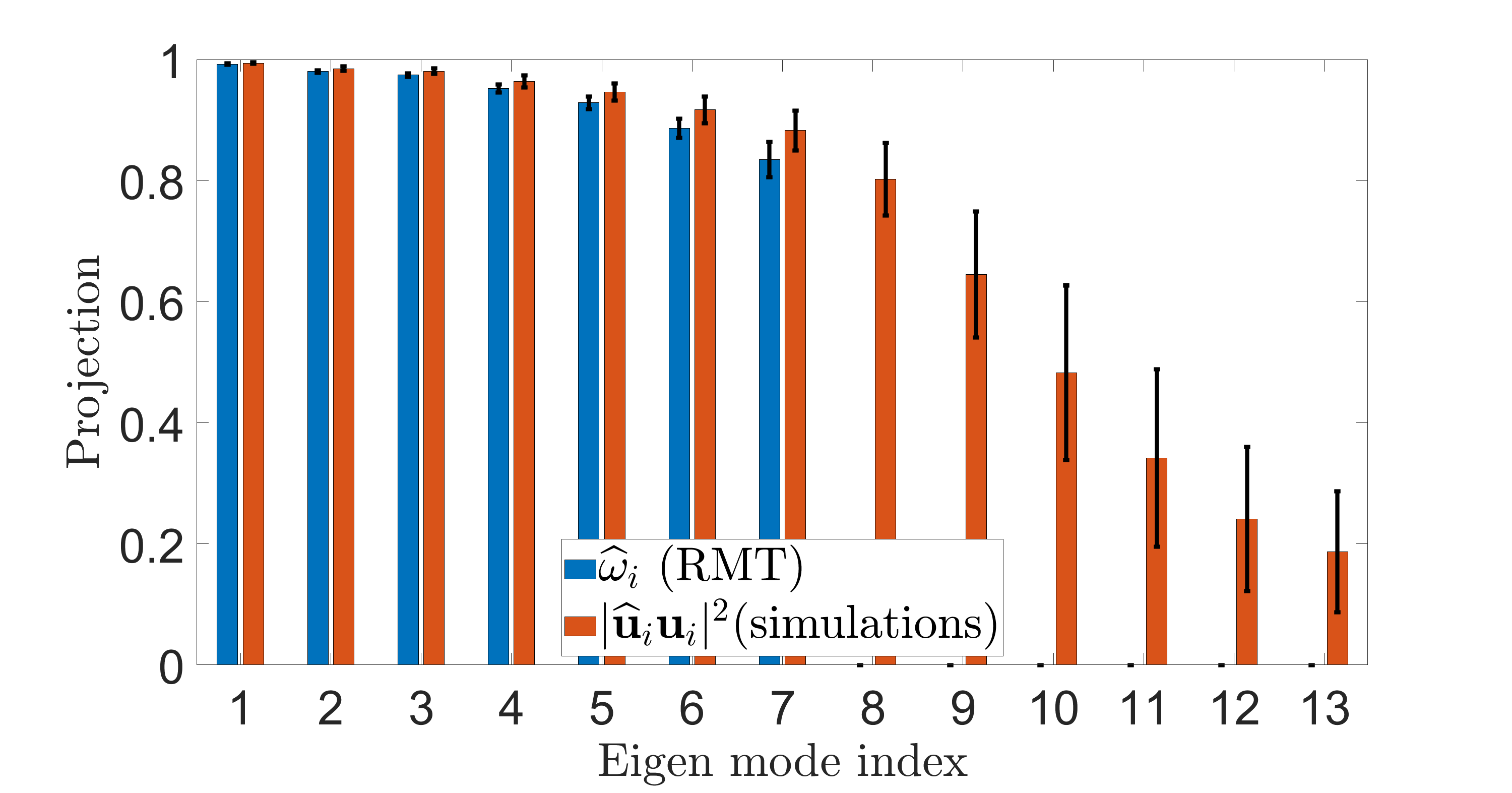}
	\end{subfigure}
	~
	\begin{subfigure}{0.45\textwidth}
		\includegraphics[width=1\textwidth]{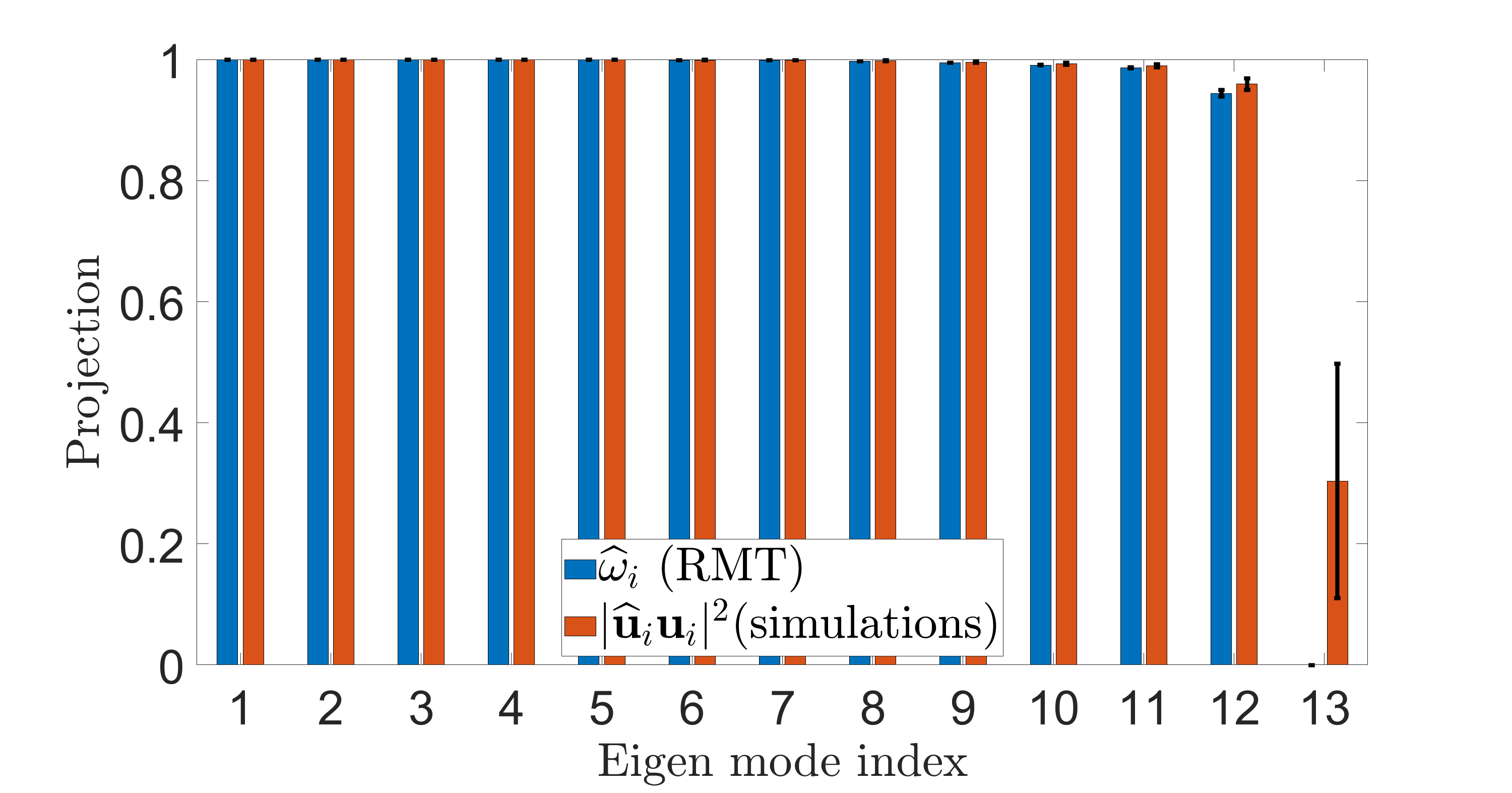}
	\end{subfigure}
	\caption{Eigenmode estimation accuracy using simulations and RMT approximation. Top: non-asymptotic regime, $p = 0.5$ ($s = 7$), Bottom: asymptotic regime, $p = 0.005$ ($s = 12$).  }
	\label{fig:proj_est}
	\vspace*{-0.4 cm}
\end{figure}

\subsection{Detection Probability of Data-Driven FDI Attacks} 
Next, we examine the detection probability of FDI attacks constructed using different estimated eigenmodes. Specifically, for each estimated eigenmode $i,$ the FDI attack is constructed as $\av =  c_i \widehat{\uv}_i,$ where $c_i$ is set to $c_i = \sqrt{\frac{\tau}{ \widehat{\omega}_i/\widehat{\mu}_i}},$ such that it satisfies the constraint of \eqref{eqn:opt_final}. Recall that asymptotically this ensures that $||\Delta \thetav||^2_2  \geq \tau$. The value of $\tau$ is set to $0.3.$ This causes an average normalized state estimation error (across trials), measured as $\eta = \frac{|| \widehat{\thetav}_a - \thetav ||_2}{|| \widehat{\thetav} - \thetav||_2}$, of $4.$ Note that $\eta$ represents the increase in state estimation due to the attack (herein, a $4$ times increase in the state estimation error).
We conduct $1000$ simulation trials and plot the results in Fig.~\ref{fig:Det_prob}. The RMT approximations of the detection probability are also plotted in Fig.~\ref{fig:Det_prob}. They are computed by evaluating the $\mathbb{P} (X \geq \tau),$ where $X$ is a $\chi^2$ distributed random variable with $M-N$ degrees of freedom and a non-centrality parameter $\nu = {\bf c}_s^{T}  \widehat{\Omegam}_s  {\bf c}_s$ (following the result of Lemma \ref{lem:proj_diag}). Once again, we make the following observations.
Firstly, the detection probability increases with the eigenmode index and the attack $\av =c_1  \widehat{\uv}_1 $ has the lowest detection probability, confirming the result of Theorem~\ref{thm:opt_attack}. Secondly, it can be observed for $i \geq s,$ the detection probability becomes very high, thus confirming the phase transition phenomenon of the RMT spiked model. 
Finally, the detection probability becomes lower as we increase the training time $T$. 

\begin{figure}[!t]
	\centering
		\includegraphics[width=0.45\textwidth]{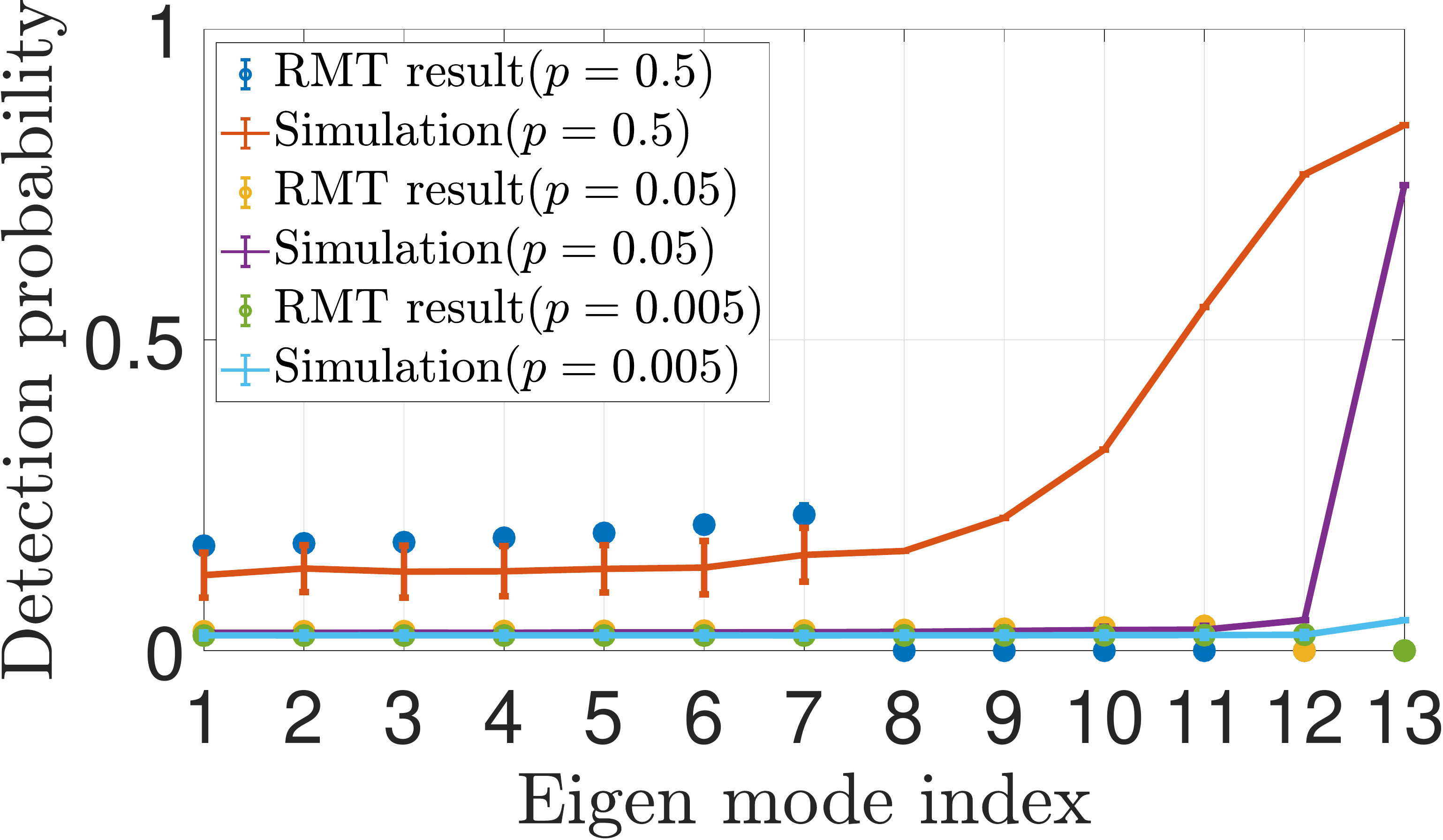}
	\caption{Detection probability of FDI attacks constructed using different eigenmodes. $s  = 7$ for $p = 0.5,$ $s  = 11$ for $p = 0.05,$ and $s  = 12$ for $p = 0.005$ respectively.}
	\label{fig:Det_prob}
		\vspace*{-0.2 cm}
\end{figure}

We also compare our results to data-driven FDI attacks proposed in prior work \cite{Kim2015,YuBlind2015} in Figure~\ref{fig:Compare_prev_work_14bus}. Here in, attack~1 is constructed according to Theorem~\ref{thm:opt_attack}. Attack~2 is constructed using the entire estimated subspace, i.e., $\av_2 = \widehat{\Um}_N \cv_N,$ where the elements of $\cv_N$ are set to $c_i = \sqrt{\frac{\tau}{N \widehat{\omega}_i/\widehat{\mu}_i}}, i = 1,\dots,N.$ Note that $\cv_N$ is adjusted to satisfy $||\Delta \thetav||^2_2  \geq \tau.$
As expected, the detection probability of attack~1 is significantly lower compared to attack~2. 

\begin{figure}[!t]
	\centering
	\includegraphics[width=0.4\textwidth]{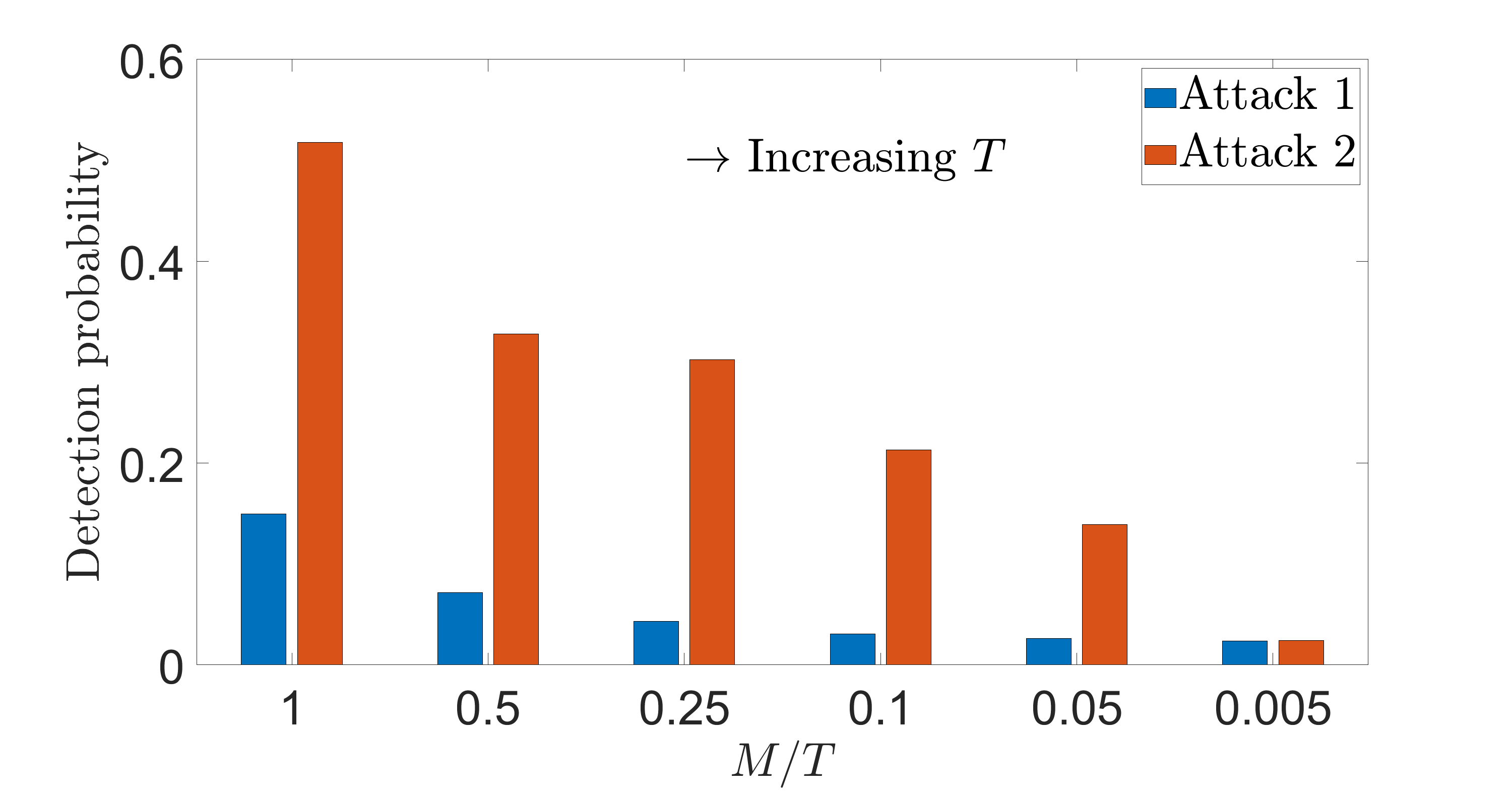}
	\caption{Attack detection probability as a function of the measurement time window. Attack 1 : optimal data-driven FDI attack (Theorem~\ref{thm:opt_attack}), Attack 2: FDI attack constructed using the entire estimated subspace. }
	\label{fig:Compare_prev_work_14bus}
		\vspace*{-0.2 cm}
\end{figure}

\begin{figure}[!t]
	\centering
	\includegraphics[width=0.4\textwidth]{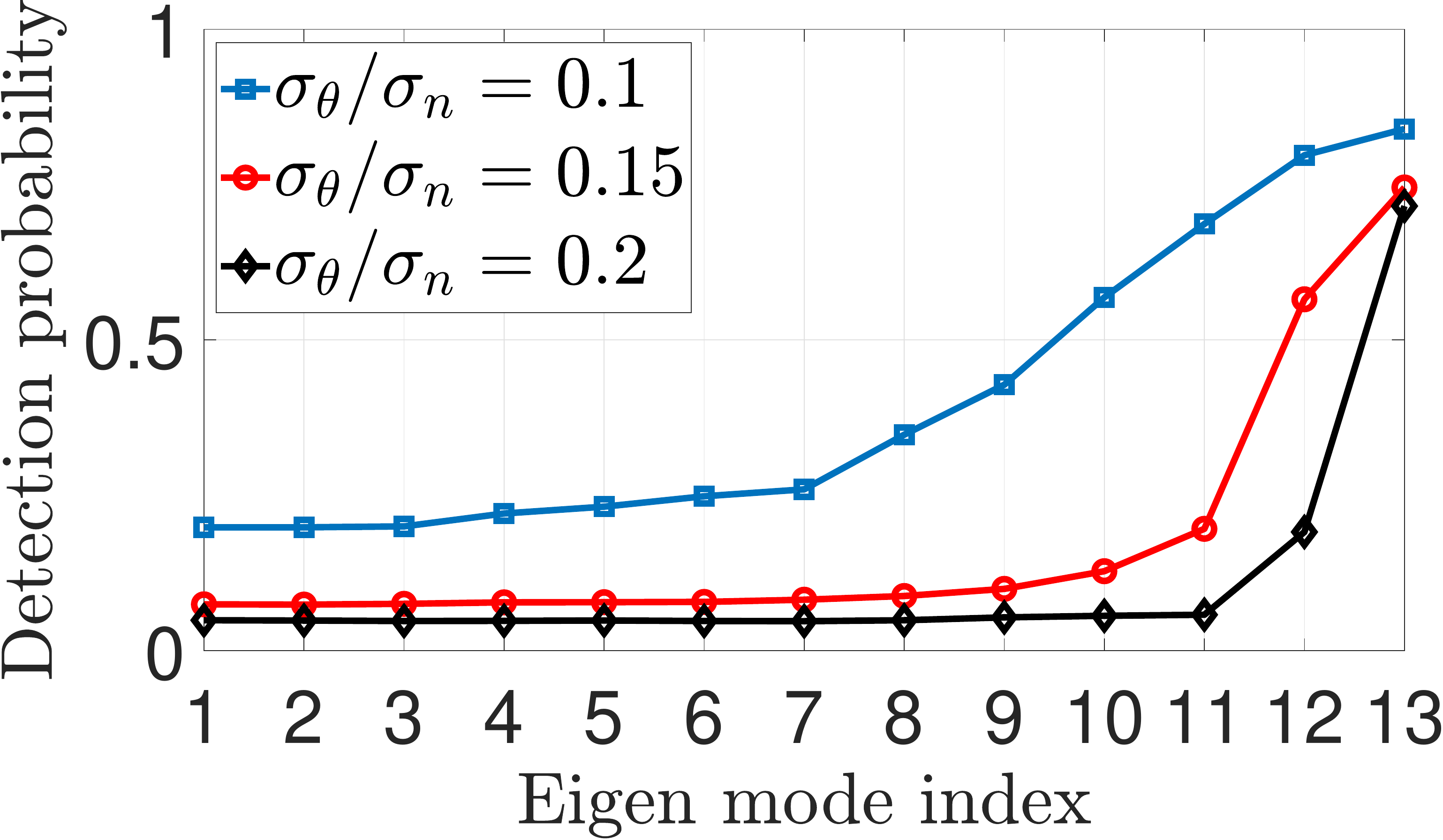}
	\caption{Attack detection probability for $M/T = 0.5$ and different values of $\sigma_{\theta}/\sigma_n.$ }
	\label{fig:Detection_prob_statevar}
\end{figure}

We also examine the algorithm's performance for different values of $\sigma_{\theta}$. We vary $\sigma_{\theta}/\sigma_{n}$ and examine the attack detection probability. In Fig.~\ref{fig:Detection_prob_statevar},  we observe that as $\sigma_{\theta}/\sigma_{n}$ increases, the detection probability decreases. This is because a higher variation in the system state enables the attacker to estimate $Col(\Hm)$ more accurately. This is also confirmed by our theoretical result -- note that $\mu_i$ increases with an increase in $\sigma_{\theta},$ which in turn results in a more accurate estimate of the basis vectors of $Col(\Hm)$  (note from Appendix~C, equation \eqref{eqn:refhere5}, that $\omega_i$ increases with $\mu_i$). Thus, the attacker can bypass the BDD with a higher probability.

\subsection{Attack Detection Probability Under AC State Estimation}
We also test the robustness of the attacks in bypassing the BDD under a non-linear AC power flow model. We inject the attack vector designed according to Theorem~\ref{thm:opt_attack} (i.e., attacks generated based on the linear model) into measurements derived from an AC power flow model and compute the attack detection probability. We adjust the value of $\tau$ to cause different attack impact (measured in terms of $\eta = \frac{|| \widehat{\thetav}_a - \thetav ||_2}{|| \widehat{\thetav} - \thetav||_2}$). For reference, we also plot the detection probability under the state estimation of the DC power flow model with $\eta = 4.$

\begin{figure}[!t]
	\centering
	\includegraphics[width=0.45\textwidth]{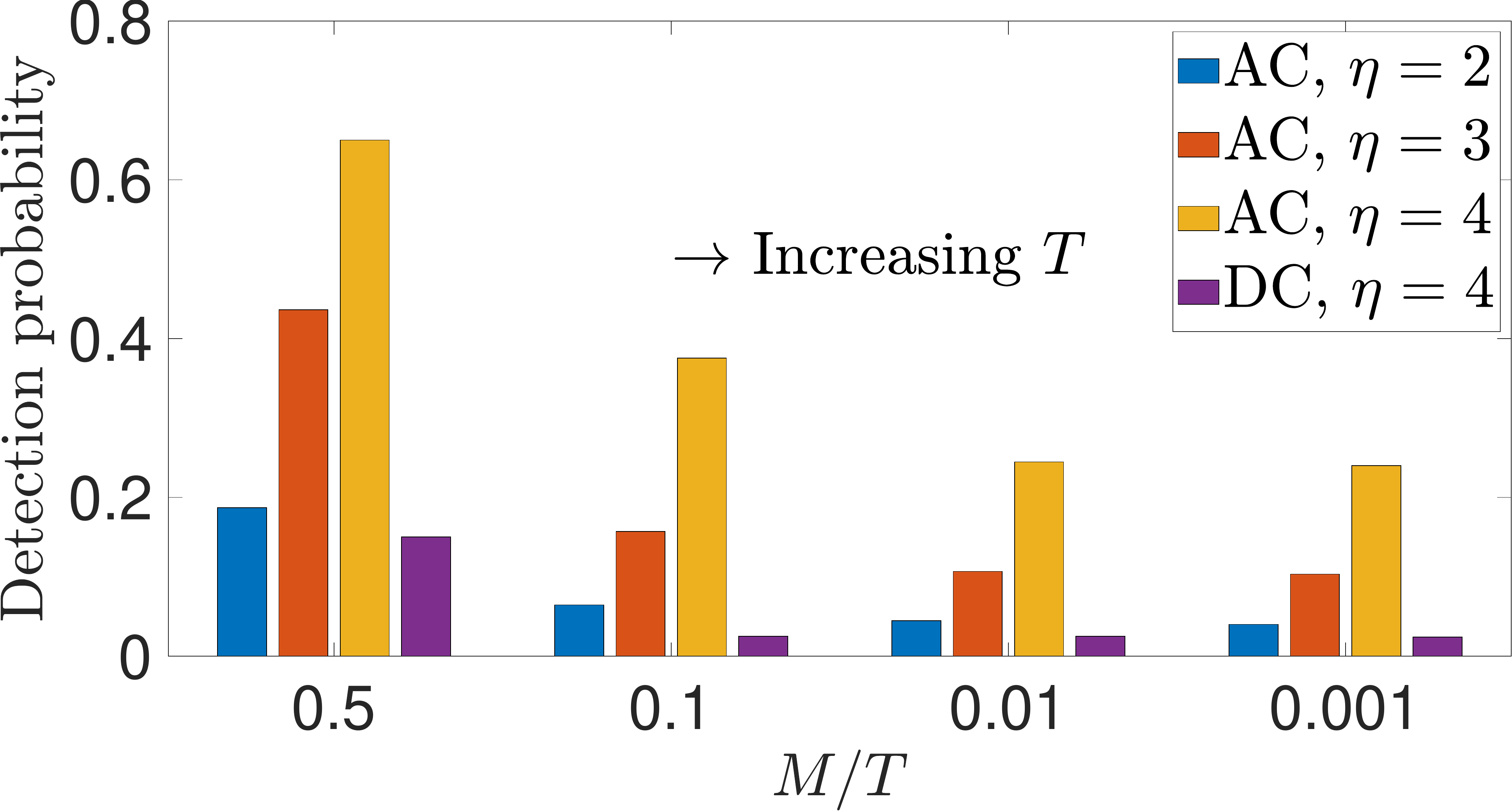}
	\caption{Detection probability of data-driven attack (designed as in Theorem~\ref{thm:opt_attack}) under BDD of AC and DC state estimation for different values of $M/T$ and $\eta$ (attack impact). }
	\label{fig:AC_Detection}
\end{figure}

The results are plotted in Fig.~\ref{fig:AC_Detection}. We observe that attacks designed based on the linear model can bypass the BDD of AC state estimation with a high probability and cause a significant attack impact. For instance, when $M/T = 0.5,$ the attack can achieve $\eta = 2$ while its detection probability remains $0.2.$  We also observe that as the measurement time window $T$ increases, the attack is capable of causing a larger impact while bypassing the BDD with a high probability. This observation is consistent with the findings in \cite{Kim2015,TEIXEIRA2011}, where it was also observed that attacks constructed based on the linear model remain valid under the AC model.

\subsection{Trade-offs in Data-Driven FDI Attacks}
\begin{figure}[!t]
	\centering
	\includegraphics[width=0.45\textwidth]{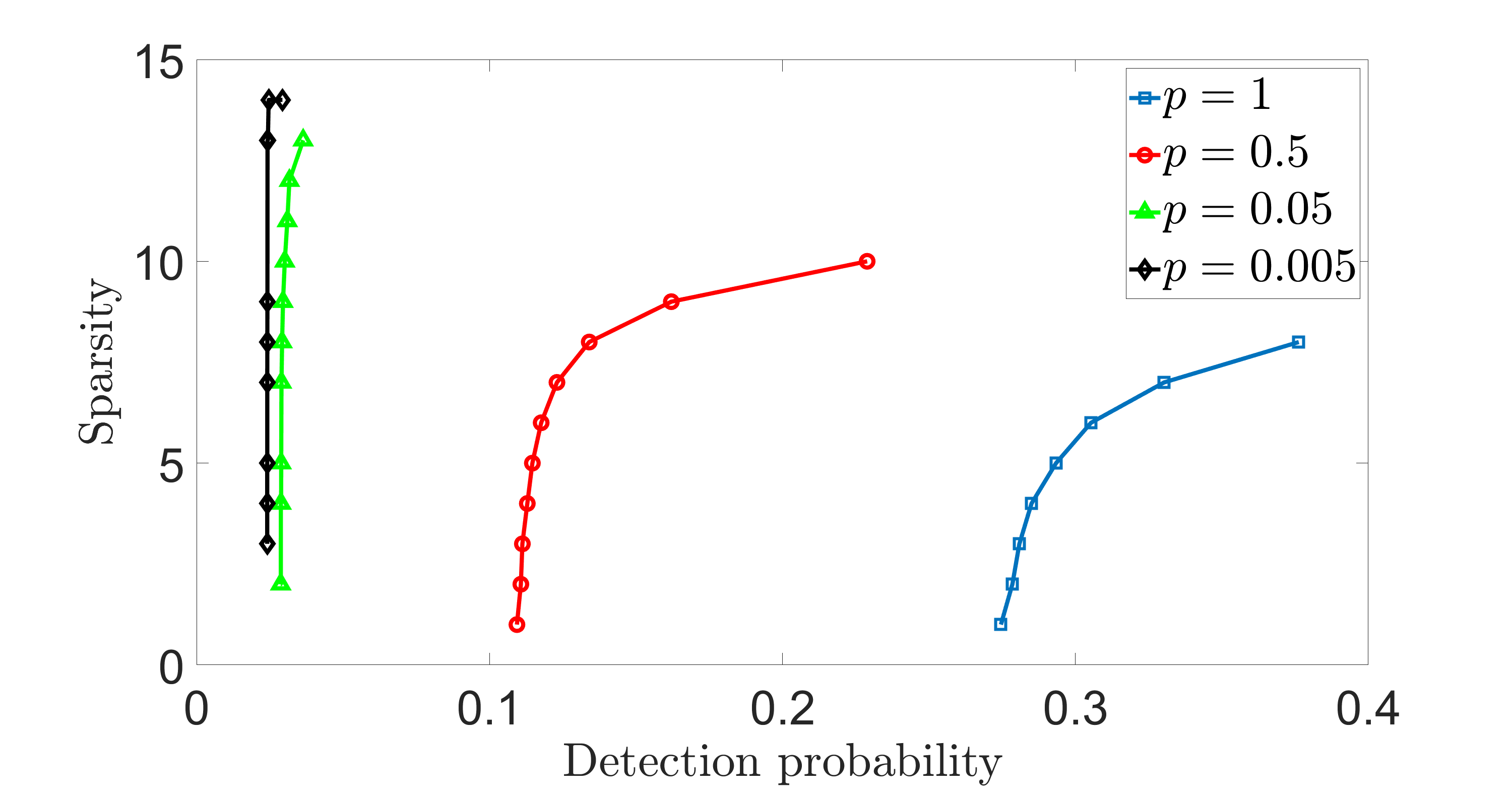}
	\caption{Trade-off between attack sparsity and the attack detection probability.}
	\label{fig:Tradeoff_14bus}
	\vspace*{-0.4 cm}
\end{figure}
Next, we illustrate the trade-off between attack's sparsity and the detection probability in Fig.~\ref{fig:Tradeoff_14bus}. The points on the trade-off curve are obtained by varying $m$ in \eqref{eqn:opt_sparse}. Specifically, we compute an attack vector for each value of $m,$ and then compute the corresponding detection probability and the attack's sparsity. Note that sparsity of the attack vector is equal to $M - K^*_m.$ We repeat the simulations for different training times $T$ (i.e. varying $p$). It can be observed that in the non-asymptotic regime (i.e, small $T$), the attack's sparsity can be enhanced if the attacker can tolerate an increase in the attack detection probability (refer to the red and blue curves).
For large $T$ however, the attacker can simply utilize the entire estimated subspace without having to compromise the attack's detection probability (green and black curves). In practice, the attacker can make use of such trade-off curves to select suitable parameters for the construction of the FDI attack, e.g., based on the available resources.

\subsection{Simulations with Large Bus Systems}
To show our effectiveness of the proposed algorithm in large bus systems, we conduct simulations using IEEE-39 and 118-bus systems. Except for the bus configuration settings, rest of the settings are maintained identical to that of Fig.~\ref{fig:Compare_prev_work_14bus}. We plot the detection probability as a function of the observation time window, and compare it to an approach that uses the entire estimated subspace to construct the FDI attack. In Fig.~\ref{fig:Compare_prev_work_largebus}, it can be observed that under the limited observation time window, the proposed approach significantly reduces the attack detection probability, thus confirming the effectiveness of our approach in these systems as well. 

\begin{figure}[!t]
	\centering
	\includegraphics[width=0.4\textwidth]{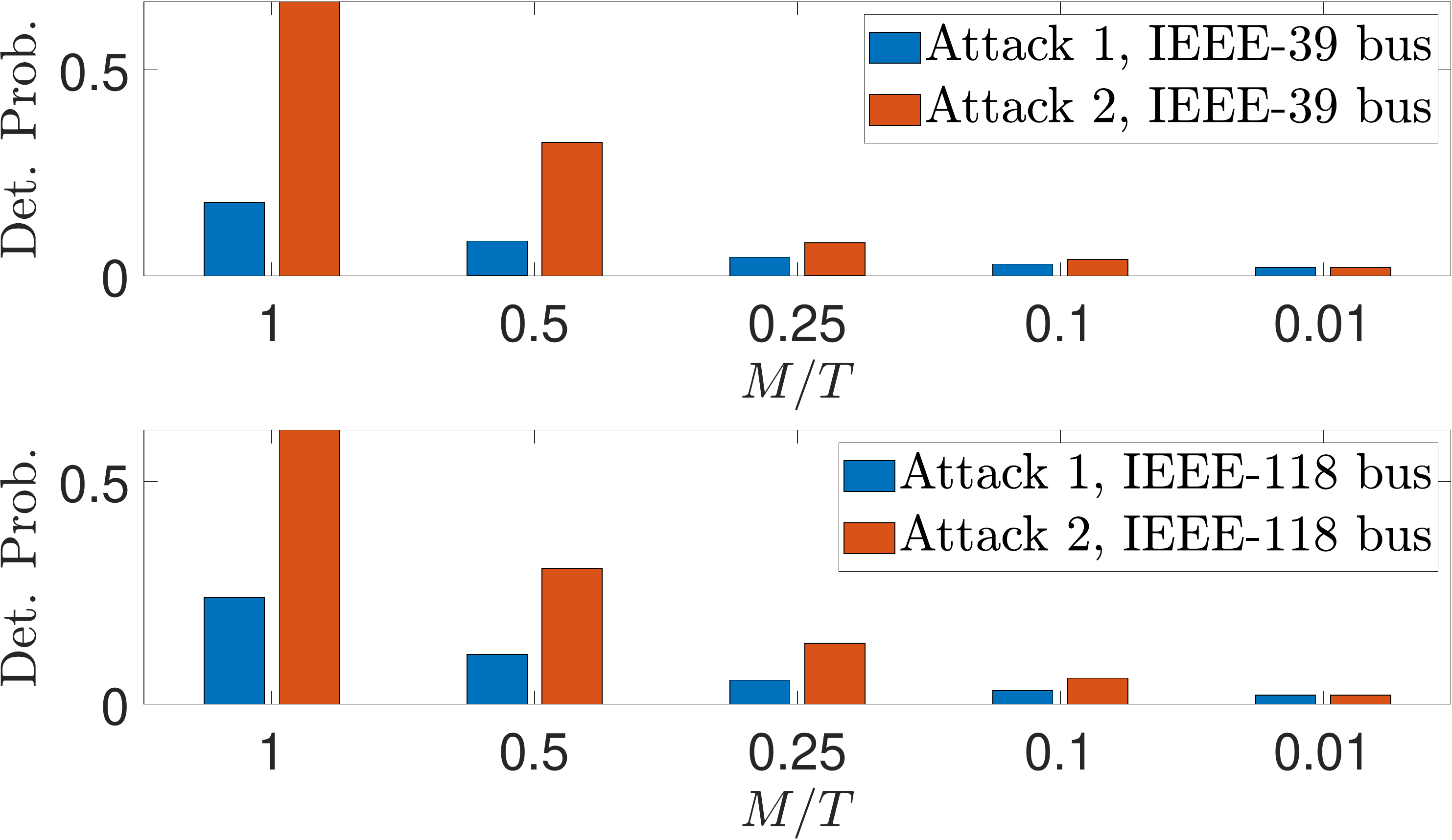}
	\caption{Attack detection probability as a function of the measurement time window for IEEE-39 and 118 bus systems. Attack 1 : optimal data-driven FDI attack (Theorem~\ref{thm:opt_attack}), Attack 2: FDI attack constructed using the entire estimated subspace. }
	\label{fig:Compare_prev_work_largebus}
		\vspace*{-0.2 cm}
\end{figure}

\section{Conclusions and Future Work}
\label{V}
We have studied the construction of data-driven FDI attacks when the attacker has access to measurements from a limited observation time window. We showed that in this regime, the attacker can enhance the BDD-bypass probability by constraining the attack vector to a lower-dimensional subspace spanned by the accurately estimated basis vectors. We used results from RMT spiked model to analyze the algorithm performance. We also characterized an important trade-off between the attacker's ability to bypass the BDD and the sparsity the attack vector. Our framework gives practical guidance to a resource-constrained attacker in designing stealthy FDI attacks. In the future, we will explore how the results of this work can be used to address the defense problem against these attackers (e.g., MTD).

There are several interesting future research directions. First, our framework assumes that the attacker has read access to all the measurements within the considered time window. However, in practice, there may be missing measurements due to communication loss or device malfunctions (see references \cite{Anwar2016}, \cite{TianDataDriven2018}). Studying data-driven FDI attacks with missing measurements under a limited measurement time window setting would require combining RMT results with that of robust PCA techniques, which is an interesting future research direction. Second, the design of data-driven FDI attacks under a limited measurement period setting for a non-linear AC power flow is challenging. While our results evidence that attacks constructed using the linear model can bypass the BDD of AC state estimation, the performance can be further enhanced considering a non-linear model in attack design. Finally, the design of defense strategies against data-driven FDI attacks (such as MTD) based on the understanding of the attacker's capabilities will be a critical problem for power grid operators.

\section*{Appendix A: Undetectable Attacks}
\subsection*{Part I: Proof of Lemma~1}
First recall that the residual vector is given by
\begin{align}
\rv& = \zv- \Hm \widehat{\thetav} = (\Id - \Km) \zv, \label{eqn:refhere1}
\end{align}
where in \eqref{eqn:refhere1}, we have substituted $\widehat{\thetav}  =  (\Hm^T  \Hm)^{-1} \Hm^T  \zv $ (note $\Wm = \Id_M$ ) 
and denoted $\Km = {\bf H}({\bf H}^{T}{\bf H})^{-1}{\bf H}^{T}.$ Further substituting $\zv = \Hm \thetav+\nv,$ we obtain,
\begin{align}
\rv& = (\Id - \Km) (\Hm \thetav+\nv) =  (\Id - \Km) \nv, \label{eqn:refhere2}
\end{align}
where \eqref{eqn:refhere2} follows since $(\Id - \Km) \Hm = {\bf 0}.$

Now consider the residual for measurements with FDI attack $\zv_a = \zv+\av.$
The residual denoted by ${\bf r}_a$ is given by:
\begin{align}
{\bf r}_a &=({\bf I}-{\bf K}){\bf z}_{a} \nonumber \\
& \stackrel{(a)}{=} ({\bf I}-{\bf K}) ({\bf n}+{\bf a}) \nonumber \\
& \stackrel{(b)}{=} ({\bf I}-{\bf U}_N{\bf U}_N^{T})({\bf n}+{\bf a}), \label{eqn:refhere3}
\end{align}
where in $(a),$ we have once again use $(\Id - \Km) \Hm = {\bf 0},$ and in $(b),$ we have used the fact that $\Km$ can be decomposed as ${\bf K}={\bf U}_N {\bf U}_N^{T}.$ Since the noise is Gaussian, $\|{\bf r}_a \|_2^2$ follows a non central chi-square distribution with $M-N$ degrees of freedom and noncentrality parameter $\nu$ given by
\begin{align}
\mathbb{E} \LSB \|{\bf r}_a \|_2^2 \RSB = \nu = {\bf a}^{T} {\bf a} -{\bf a}^{T} {\bf U}_N {\bf U}_N^{T}\widehat{\bf U}_s {\bf a}_s.
\end{align}
In particular, for a data-driven FDI attack of the form ${\bf a}=\widehat{\bf U}_s{\bf c}_s,$ we have 
\begin{align}
\nu = {\bf c}^{T}_s{\bf c}_s-{\bf c}^{T}_s \widehat{\bf U}_s^{T}{\bf U}_N {\bf U}_N^{T}\widehat{\bf U}_s {\bf c}_s.
\end{align}
For $M,T \to \infty, M/T = c,$ using the result of Theorem~\ref{thm:spike_result}, the matrix $\widehat{\bf U}_s^{T}{\bf U}_N {\bf U}_N^{T}\widehat{\bf U}_s$  converges to a diagonal matrix whose diagonal elements are given by 
$\omega_i$ defined in  \eqref{eqn:omega_def}. Since $N$ is assumed to be fixed (and finite) and independent of $M$ and $T$  (see Theorem~\ref{thm:spike_result}), we obtain,
\begin{align*}
	{\bf c}_s^{T} \widehat{\bf U}_s^{T}{\bf U}_N{\bf U}_N^{T}\widehat{\bf U}_s{\bf c}_s  - {\bf c}_s^{T}  \Omegam_s  {\bf c}_s \asto 0.
\end{align*}
Further, using the result $| \mu_i - \widehat{\mu}_i| \asto 0, i = 1,\dots,s$ (Theorem~\ref{thm:spike_result}), from continuous mapping theorem \cite{Bill86}, it follows that 
\begin{align*}
| \omega_i - \widehat{\omega}_i| \asto 0, i = 1,\dots,s.
\end{align*} 
Note that $\omega_i$ is a continuous function $\mu_i$ (see  \eqref{eqn:omega_def}).

\subsection{Part II: Proof of Lemma~3} 
Since $\widehat{\thetav}  =  (\Hm^T  \Hm)^{-1} \Hm^T  \zv ,$ it follows that 
$\Delta \widehat{\thetav}$ is given by
\begin{align}
\Delta \widehat{\thetav} &=  (\Hm^T  \Hm)^{-1} \Hm^T  (\zv^a - \zv) \nonumber \\
\Delta \widehat{\thetav} & =   (\Hm^T  \Hm)^{-1} \Hm^T  \widehat{\Um}_s \cv_s. \label{eqn:refhere4}
\end{align}
From \eqref{eqn:refhere4}, we obtain,
\begin{align*}
|| \Delta \widehat{\thetav} ||_2 & = \cv^T_s   \widehat{\Um}_s^T \Hm  (\Hm^T  \Hm)^{-2} \Hm^T  \widehat{\Um}_s \cv_s \nonumber \\
& \stackrel{(a)}{=} \cv^T_s \widehat{\Um}_s^T \Um_N \Dm^{-1} \Um^T_N \widehat{\Um}_s \cv_s, \\
& \stackrel{(b)}{=} \sigma^2_{\theta} \cv^T_s \widehat{\Um}_s^T \Um_N \Mm^{-1} \Um^T_N \widehat{\Um}_s \cv_s.
\end{align*}
where $(a)$ follows since $\Hm  (\Hm^T  \Hm)^{-2} \Hm^T = \Um^T_N  \Dm^{-1} \Um^T_N.$ Here in, $\Dm = \text{diag} (d_1,\dots,d_N),$ where $\{ d_i \}^N_{i = 1}$ are the first $N$ eigenvalues of $\Hm \Hm^T$ (in decreasing order). In $(b),$ recall that $\Mm = \text{diag} (\mu_1,\dots,\mu_N)$, where $\mu_i$ are the eigenvalues of  $\sigma^2_{\theta} \Hm \Hm^T.$ Similar to the proof of Lemma~1 (Part I), it can be shown that 
\begin{align*}
 \cv^T_s   \widehat{\Um}_s^T \Um_N \Mm^{-1} \Um^T_s \widehat{\Um}_N \cv_s - \widehat{\sigma}^2_{\theta} \cv^T_s \widehat{\Mm}^{-1} \widehat{\Omegam} \cv_s \asto 0.
\end{align*}

\section*{Appendix B: Proof of Lemma~2}
First, we consider the proof of $1 > \omega_1 \geq \omega_2 \geq,\dots,\geq \omega_s > 0.$

We first show that $0 \leq \omega_i \leq 1, \forall i.$ 
By definition $\mu_i > \sqrt{p}$ for $i = 1,\dots,s.$ For $\mu_i > \sqrt{p}$, we have
$1- p/\mu^2_i > 0.$ Thus, $\omega_i > 0, \forall i.$ 

Also, it is straightforward to note that $1- p/\mu^2_i < 1$ and $1+p/\mu_i > 1.$ Thus, $\omega_i = \frac{1- p/\mu^2_i}{1+p/\mu_i} < 1, \forall i.$

Finally, note that the derivative of $\omega_i$ with respect to $\mu_i$ is given by 
\begin{align}
\frac{d \omega_i}{d \mu_i} = \frac{\mu_i^2 p + 2 \mu_i p+ p^2}{\mu_i (\mu_i+p)} > 0. \label{eqn:refhere5}
\end{align}
where in \eqref{eqn:refhere5}, the inequality follows since all terms in the derivative are positive. 
Thus, we conclude that $\omega_i$ is an increasing function of $\mu_i.$ Since by definition, $\mu_1 \geq \mu_2 \geq,\dots,\geq \mu_s,$ it follows that $1 > \omega_1 \geq \omega_2 \geq,\dots,\geq \omega_s.$

Next, we consider the proof of  $1 > \widehat{\omega}_1 \geq \widehat{\omega}_2 \geq,\dots,\geq \widehat{\omega}_s > 0.$ 
Once again, by definition, we have $\widehat{\lambda}_i >  (1+\sqrt{p})^2$ 
for $i = 1,\dots,s.$  It can be verified from Theorem~\ref{thm:spike_result} that for $\widehat{\lambda}_i >  (1+\sqrt{p})^2,$ $\widehat{\mu}_i > \sqrt{p}.$ 
Thus, $1 > \widehat{\omega}_1 \geq \widehat{\omega}_2 \geq,\dots,\geq \widehat{\omega}_s > 0$ can be proved by arguments identical to the previous case (i.e., the proof of $1 > \omega_1 \geq \omega_2 \geq,\dots,\geq \omega_s > 0$).

\section*{Appendix C: Proof of Theorem~1}
 Note that optimization problem \eqref{eqn:opt_final} can be rewritten as  
 \beqa
\dsp \min_{\cv}    \sum^s_{i = 1}  (1-\widehat{\omega}_i) c^2_i, \label{eqn:opt_eqv}  \ 
 \text{s.t.}   \  \sum^s_{i = 1} \widehat{\sigma}^2_{\theta} \LB \frac{\widehat{\omega}_i}{ \widehat{\mu}_i} \RB c^2_i  \geq \tau. 
 \eeqa 
By a simple replacement of the variable $y_i = c^2_i,$ \eqref{eqn:opt_eqv} becomes 
 \beqa
\dsp \min_{\yv}    \sum^s_{i = 1}  (1-\widehat{\omega}_i) y_i, \label{eqn:opt_eqv_LP} \
 \text{s.t.}   \  \sum^s_{i = 1} \widehat{\sigma}^2_{\theta} \LB \frac{\widehat{\omega}_i}{ \widehat{\mu}_i} \RB y_i  \geq \tau. \nonumber 
\eeqa 
Note that \eqref{eqn:opt_eqv_LP} is a linear programming (LP) problem. Since the coefficients of the objective function as well as the constraints are positive (see Appendix~B), the optimal solution of \eqref{eqn:opt_eqv_LP} must satisfy the constraint with equality, i.e., $\sum^s_{i = 1} \LB \frac{\widehat{\omega}_i}{ \widehat{\mu}_i} \RB y_i  = \tau$. Thus, we can replace the inequality constraint of \eqref{eqn:opt_eqv_LP} with equality. We perform one more change of variable as 
$$y_i = \kappa_i \LB \frac{\tau}{ \widehat{\sigma}^2_{\theta} (\widehat{\omega}_i/\widehat{\mu}_i)} \RB, i = 1,\dots,s.$$
The LP \eqref{eqn:opt_eqv_LP} along with replacing the constraint with equality now becomes
 \beqa
\dsp \min_{{ \bf \kappa}}    \sum^s_{i = 1}  \LB \frac{1-\widehat{\omega}_i}{ \widehat{\sigma}^2_{\theta} ( \widehat{\omega}_i/\widehat{\mu}_i)} \RB \kappa_i , \label{eqn:opt_eqv2} \
 \text{s.t.} \   \sum^s_{i = 1}  \kappa_i = 1 \nonumber.
\eeqa 
It can be verified that the coefficients of the objective function $$ \frac{1- \widehat{\omega}_i}{\widehat{\sigma}^2_{\theta} ( \widehat{\omega}_i/\widehat{\mu}_i)}$$ is a decreasing function of $\widehat{\mu}_i.$ (This can be verified by differentiating the coefficient terms with respect to $\mu_i$ and noting that the derivative is negative.) Since $\mu_1 \geq \mu_2 \geq\dots \geq \mu_s$, we have, 
$$ \frac{1- \widehat{\omega}_1}{ \widehat{\sigma}^2_{\theta}( \widehat{\omega}_1/\widehat{\mu}_1)} \geq \frac{1- \widehat{\omega}_2}{ \widehat{\sigma}^2_{\theta} (\widehat{\omega}_2/\widehat{\mu}_2)} \geq \dots \geq \frac{1- \widehat{\omega}_s}{ \widehat{\sigma}^2_{\theta}(\widehat{\omega}_s/\widehat{\mu}_s)}.$$
The solution to \eqref{eqn:opt_eqv2} is thus given by 
$\kappa_1 = 1, \kappa_2 = \kappa_3= \dots = \kappa_s = 0.$ Hence, the solution to \eqref{eqn:opt_eqv} becomes $$c_1 = \sqrt{\frac{\tau}{ \widehat{\sigma}^2_{\theta}(\widehat{\omega}_i/\widehat{\mu}_i)}}$$ and $c_2 = c_3= \dots = c_s = 0.$

\bibliographystyle{IEEEtran}
\bibliography{bibliography}

\end{document}